\newtheorem{dfn}{Definition}[section]
\newtheorem{thm}[dfn]{Theorem}
\newtheorem{lmm}[dfn]{Lemma}
\newtheorem{cor}[dfn]{Corollary}
\newtheorem{prp}[dfn]{Proposition}
\newtheorem{exm}[dfn]{Example}
\newcommand{\qbit}[1]{|#1\rangle}
\newcommand{\C}{\mathbb{C}}
\newcommand{\Hilbert}{\mathcal{H}}
\newcommand{\qin}[2]{\langle #1 | #2 \rangle}
\newcommand{\qeff}[1]{\langle#1|}
\newcommand{\qact}[2]{\qbit{#1}\qeff{#2}}
\newcommand{\mat}[9]{\left(
\begin{array}{ccc}
#1 & #2 & #3\\
#4 & #5 & #6\\
#7 & #8 & #9
\end{array}\right)}
\newcommand{\G}{\mathcal{G}}
\newcommand{\W}{\mathcal{W}}
\newcommand{\I}{\mathcal{I}}
\newcommand{\T}{\mathcal{T}}
\newcommand{\gph}[5]{\mbox{\raisebox{#1ex}[#2ex][#3ex]{\includegraphics[keepaspectratio=true, scale = #4]{image/#5.eps}}}}
\newcommand{\gphh}[5]{\mbox{\raisebox{#1ex}[#2ex][#3ex]{\includegraphics[keepaspectratio=true, height = #4ex]{image/#5.eps}}}}
\newcommand{\gmu}{\mbox{\raisebox{-1.2ex}[3ex][2ex]{\includegraphics[keepaspectratio=true, scale = 0.3]{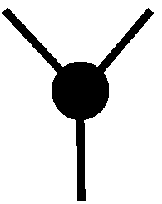}}}}
\newcommand{\gdelta}{\mbox{\raisebox{-1.2ex}[3ex][2ex]{\includegraphics[keepaspectratio=true, scale = 0.3]{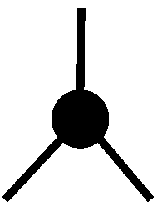}}}}
\newcommand{\geta}{\mbox{\raisebox{-1.0ex}[3ex][2ex]{\includegraphics[keepaspectratio=true, scale = 0.3]{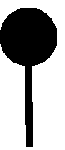}}}}
\newcommand{\gepsilon}{\mbox{\raisebox{-1.0ex}[3ex][2ex]{\includegraphics[keepaspectratio=true, scale = 0.3]{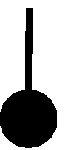}}}}
\title{Graphical Classification of Entangled Qutrits}
\author{Kentaro Honda
\institute{Dept.\ Computer Science,\ University of Tokyo, Japan}
\email{honda@is.s.u-tokyo.ac.jp}}
\begin{document}
\maketitle

\begin{abstract}
  A multipartite quantum state is entangled if it is not separable.
  Quantum entanglement plays a fundamental role in many applications
  of quantum information theory, such as quantum teleportation.
  Stochastic local quantum operations and classical communication
  (SLOCC) cannot essentially change quantum entanglement without
  destroying it. Therefore, entanglement can be classified by dividing
  quantum states into equivalence classes, where two states are
  equivalent if each can be converted into the other by SLOCC.
  Properties of this classification, especially in the case of non
  two-dimensional quantum systems, have not been well
  studied. Graphical representation is sometimes used to clarify the
  nature and structural features of entangled states. SLOCC
  equivalence of quantum bits (qubits) has been described graphically
  via a connection between tripartite entangled qubit states and
  commutative Frobenius algebras (CFAs) in monoidal categories. In
  this paper, we extend this method to qutrits, i.e., systems that
  have three basis states. We examine the correspondence between CFAs
  and tripartite entangled qutrits. Using the symmetry property, which
  is required by the definition of a CFA, we find that there are only
  three equivalence classes that correspond to CFAs. We represent
  qutrits graphically, using the connection to CFAs. We derive
  equations that characterize the three equivalence classes. Moreover,
  we show that any qutrit can be represented as a composite of three
  graphs that correspond to the three classes.
\end{abstract}

\section{Introduction}

In quantum computing, it may be necessary to send quantum information,
but quantum information cannot be replicated
\cite{Wootters1982}. Foundational methods of quantum information
theory, such as quantum teleportation \cite{PhysRevLett.70.1895},
provide a way to send quantum information using quantum entanglement,
i.e., states that cannot be separated. Quantum entanglement is a
non-local property of quantum states, so entanglement does not
increase by stochastic local quantum operations and classical
communication (SLOCC). The entangled states are divided into
equivalence classes by SLOCC-equivalence, which relates states that can be
converted into each other by SLOCC. Several recent studies have investigated SLOCC-equivalence classes \cite{PhysRevA.62.062314}\cite{PhysRevA.74.052336}\cite{PhysRevA.75.022318}\cite{PhysRevA.81.052315}\cite{PhysRevA.65.052112}\cite{qutrit}.
In this paper, we focus on tripartite qutrits. Qutrits are systems
that have three-dimensional state spaces. They are not as well-studied
as qubits
\cite{PhysRevLett.85.3313}\cite{PhysRevA.64.012306}\cite{PhysRevLett.88.127901}.
To describe and clarify the SLOCC-equivalence classes of qutrits and
their structural features, we express them graphically.

Morphisms of categories can be expressed graphically \cite{Selinger09Survey}. Moreover, recently, quantum
protocols and quantum computing have been interpreted in monoidal categories,
which are categories with a tensor product. Abramsky and Coecke gave
quantum axioms and an interpretation of quantum protocols in a kind of
monoidal category called a biproduct dagger compact closed category
\cite{paper72}. Selinger gave a categorical semantics for a quantum
programming language QPL \cite{Selinger04}\cite{Selinger2007139}.
Using graphical representations of morphisms in monoidal categories,
and connections between quantum information theory and category
theory, qutrits can also be expressed graphically.

For graphical expression, we adopted an extension of a previous method \cite{BCAK}, in which highly entangled and highly symmetric quantum systems correspond to commutative Frobenius algebras (CFAs) in monoidal categories. That study employed a graphical representation of qubits.
This representation reflects the degree of entangledness.
In this paper, we used qutrits.
Although the classes of tripartite qutrits are infinite, there are a limited number of SLOCC-equivalence classes of tripartite qubits.
Using symmetry, we found that only three classes corresponded to CFAs.
The classifications are based on the algebraic structure of some kinds of qutrits.
We characterized each of these by applying the equations and graphing the results; we obtained three graphs.
Finally, we showed that any qutrits can be expressed graphically using the three graphs and single qutrits.

The remainder of this paper is organized as follows. In section 2, we define SLOCC-equivalence and describe the infinite classes of tripartite qutrits.
In section 3, we provide graphical representations of a monoidal category and a CFA, and describe some theorems as well as the qubits used in a previous study \cite{BCAK}.
Our results are presented in section 4.
We classify SLOCC-equivalence classes into non-maximal, non-symmetric, Frobenius, or other classes, and show that there are only three Frobenius classes. We also define ISCFA, a type of CFA, and prove the uniqueness of the correspondence between Frobenius classes and the three CFAs (SCFA, ACFA, and ISCFA). Finally, we demonstrate how to construct a qutrit.

\section{SLOCC-equivalence}
Whether or not a state is entangled is an important question. Another important question is how entangled the state is.
For example, we consider the following two tripartite qubits. In the following, we omit the normalization factor of states.
\begin{equation}
\qbit{\mbox{GHZ}}:= \qbit{000}+\qbit{111} \label{GHZ},
\end{equation}
\begin{equation}
\qbit{\mbox{W}}:= \qbit{001}+\qbit{010}+\qbit{100} \label{W}.
\end{equation}
There are two states: the GHZ state and the W state.
Let the first qubit of each tripartite qubit  be observed with respect
to the canonical basis of $\C^2$.
After the GHZ state is observed, the changed state is $\qbit{000}$ or $\qbit{111}$. Both states are separable.
However, after the W state is observed, the changed state is
$\qbit{01}+\qbit{10}$ or $\qbit{00}$. $\qbit{00}$ is separable, but
$\qbit{01}+\qbit{10}$ is entangled. Hence, both the GHZ state and the
W state are entangled, but to different degrees. We need to classify systems based on their degree of entanglement.

Entanglement is a property of a multi-partite system, so local operations within each system cannot essentially change entanglement without destroying it. 
Classical communication does not change the properties of a system. Therefore, quantum local operations and classical communication (LOCC) do not essentially change the entanglement of systems without destroying it.
If a state $\qbit{\psi}$ can be converted into $\qbit{\phi}$ by LOCC
with non-zero probability, we say that $\qbit{\psi}$ can be converted into $\qbit{\phi}$ by stochastic local quantum operations and classical communication (SLOCC).
Using SLOCC, we can define an equivalence relation on entangled states.

\begin{dfn}[SLOCC-equivalence]\rm
If states $\qbit{\psi}$ and $\qbit{\phi}$ can be converted into each other by SLOCC, then they are {\em SLOCC-equivalent}. 
\end{dfn}

Moreover, based on SLOCC-equivalence, SLOCC-maximality can be defined.

\begin{dfn}[SLOCC-maximality]\label{SLOCCmaximal}\rm
Let $\qbit{\psi}$ be a state.
If every $\qbit{\phi}$ that can be converted into $\qbit{\psi}$ by SLOCC is SLOCC-equivalent to $\qbit{\psi}$, then $\qbit{\psi}$ is SLOCC-maximal.
\end{dfn}

SLOCC-equivalence is an equivalence relation, and therefore it determines a notion of equivalence class (SLOCC class). When $x$ is a representative of a SLOCC class, we use $\overline{x}$ to indicate the SLOCC class.
It has been shown that $N$-partite systems $\qbit{\psi}$ and $\qbit{\phi}$ are SLOCC-equivalent iff there exist invertible matrices $L_1,\ldots,L_n$ such that $\qbit{\psi} = \bigotimes_{i=1}^N L_i \qbit{\phi}$ \cite{PhysRevA.62.062314}.

There are six possible SLOCC classes of tripartite qubits, namely $\overline{\qbit{000}}$, $\overline{\qbit{000}+\qbit{011}}$, $\overline{\qbit{000}+\qbit{101}}$, $\overline{\qbit{000}+\qbit{110}}$, $\overline{\qbit{\mbox{GHZ}}}$, and $\overline{\qbit{\mbox{W}}}$. In contrast, there is an infinite number of SLOCC classes of the tripartite qutrits in which we are interested here.
A previous study used an inductive method \cite{PhysRevA.74.052336} to identify some SLOCC classes of tripartite qutrits \cite{qutrit}.
These classes are shown in Table~\ref{SLOCCTable} below. In the table, $\phi$, $\varphi$, $\chi$, and $\psi$ are unit vectors of $\C^3$. Notice that $\overline{\qbit{\pi(\phi, \varphi, \chi, \psi)}}$ indicates an infinite number of SLOCC classes. 
$\overline{\qbit{\phi_{03}(\phi, \varphi)}}$, $\overline{\qbit{\varphi_2(\phi, \varphi)}}$, and $\overline{\qbit{\phi_{04}(\phi, \varphi)}}$ also indicate infinite families of classes.
\begin{table*}
\begin{center}
\begin{tabular}{||c|c|c||c|c||}
\hhline{|t:=:=:=:t:=:=:t|}
Name & \multicolumn{2}{c||}{Representative} & Name & Representative
\\\hhline{||-|-|-||-|-||}
\hhline{||-|-|-||-|-||}
\hhline{||-|-|-||-|-||}
\multicolumn{1}{||c|}{$\qbit{\psi_{0}}$} & \multicolumn{2}{c||}{$\qbit{000}$}
&
\multicolumn{1}{c|}{$\qbit{\psi_{12}}$} & \multicolumn{1}{c||}{$\qbit{000} + \qbit{011} + \qbit{101}+\qbit{112}$}
\\\hhline{||-|-|-||-|-||}
\multicolumn{1}{||c|}{$\qbit{\psi_{1}}$} & \multicolumn{2}{c||}{$\qbit{000} + \qbit{011}$}
&
\multicolumn{1}{c|}{$\qbit{\psi_{13}}$} & \multicolumn{1}{c||}{$\qbit{000} + \qbit{011} + \qbit{112}+\qbit{120}$}
\\\hhline{||-|-|-||-|-||}
\multicolumn{1}{||c|}{$\qbit{\psi_{2}}$} & \multicolumn{2}{c||}{$\qbit{000} + \qbit{011} + \qbit{022}$}
&
\multicolumn{1}{c|}{$\qbit{\psi_{14}}$} & \multicolumn{1}{c||}{$\qbit{000} + \qbit{011} + \qbit{120}+\qbit{101}$}
\\\hhline{||-|-|-||-|-||}
\multicolumn{1}{||c|}{$\qbit{\psi_{3}}$} & \multicolumn{2}{c||}{$\qbit{000} + \qbit{101}$}
&
\multicolumn{1}{c|}{$\qbit{\psi_{15}}$} & \multicolumn{1}{c||}{$\qbit{000} + \qbit{011} + \qbit{120}+\qbit{102}$}
\\\hhline{||-|-|-||-|-||}
\multicolumn{1}{||c|}{$\qbit{\psi_{4}}$} & \multicolumn{2}{c||}{$\qbit{000} + \qbit{110}$}
&
\multicolumn{1}{c|}{$\qbit{\psi_{16}}$} & \multicolumn{1}{c||}{$\qbit{000} + \qbit{011} + \qbit{022} + \qbit{101}$}
\\\hhline{||-|-|-||-|-||}
\multicolumn{1}{||c|}{$\qbit{\psi_{5}}$} & \multicolumn{2}{c||}{$\qbit{000} + \qbit{111}$}
&
\multicolumn{1}{c|}{$\qbit{\psi_{17}}$} & \multicolumn{1}{c||}{$\qbit{000} + \qbit{011} + \qbit{022} + \qbit{101} + \qbit{112}$}
\\\hhline{||-|-|-||-|-||}
\multicolumn{1}{||c|}{$\qbit{\psi_{6}}$} & \multicolumn{2}{c||}{$\qbit{000} + \qbit{011} + \qbit{101}$}
&
\multicolumn{1}{c|}{$\qbit{\psi_{18}}$} & \multicolumn{1}{c||}{$\qbit{000} + \qbit{011} + \qbit{022} + \qbit{112} + \qbit{120}$}
\\\hhline{||-|-|-||-|-||}
\multicolumn{1}{||c|}{$\qbit{\psi_{7}}$} & \multicolumn{2}{c||}{$\qbit{000} + \qbit{011} + \qbit{112}$}
&
\multicolumn{1}{c|}{$\qbit{\psi_{19}}$} & \multicolumn{1}{c||}{$\qbit{000} + \qbit{011} + \qbit{022} + \qbit{120} + \qbit{101}$}
\\\hhline{||-|-|-||-|-||}
\multicolumn{1}{||c|}{$\qbit{\psi_{8}}$} & \multicolumn{2}{c||}{$\qbit{000} + \qbit{011} + \qbit{120}$}
&
\multicolumn{1}{c|}{$\qbit{\psi_{20}}$} & \multicolumn{1}{c||}{$\qbit{000} + \qbit{011} + \qbit{122}$}
\\\hhline{||-|-|-||-|-||}
\multicolumn{1}{||c|}{$\qbit{\psi_{9}}$} & \multicolumn{2}{c||}{$\qbit{000} + \qbit{101} + \qbit{202}$}
&
\multicolumn{1}{c|}{$\qbit{\psi_{21}}$} & \multicolumn{1}{c||}{$\qbit{000} + \qbit{110} + \qbit{220}$}
\\\hhline{||-|-|-||-|-||}
\multicolumn{1}{||c|}{$\qbit{\psi_{10}}$} & \multicolumn{2}{c||}{$\qbit{000} + \qbit{111} + \qbit{202}$}
&
\multicolumn{1}{c|}{$\qbit{\psi_{22}}$} & \multicolumn{1}{c||}{$\qbit{000} + \qbit{111} + \qbit{220}$}
\\\hhline{||-|-|-||-|-||}
\multicolumn{1}{||c|}{$\qbit{\psi_{11}}$} & \multicolumn{2}{c||}{$\qbit{000} + \qbit{111} + \qbit{201}$}
&
\multicolumn{1}{c|}{$\qbit{\G}$} & \multicolumn{1}{c||}{$\qbit{000} + \qbit{111} + \qbit{222}$}
\\\hhline{|:=:=:=:b:=:=:|}
\multicolumn{2}{||c|}{Name} & 
\multicolumn{3}{c||}{Representative}
\\\hhline{||-|-|-|-|-||}
\hhline{||-|-|-|-|-||}
\hhline{||-|-|-|-|-||}
\multicolumn{2}{||c|}{$\qbit{\pi(\phi, \varphi, \chi, \psi)}$} & \multicolumn{3}{c||}{$\qbit{000} + \qbit{011} + \qbit{1\phi\varphi} + \qbit{2\chi\psi}$}
\\\hhline{||-|-|-|-|-||}
\multicolumn{2}{||c|}{$\qbit{\phi_{0}}$} & \multicolumn{3}{c||}{$\qbit{000} + \qbit{011} + \qbit{022} + \qbit{101} + \qbit{202}$}
\\\hhline{||-|-|-|-|-||}
\multicolumn{2}{||c|}{$\qbit{\phi_{1}}$} & \multicolumn{3}{c||}{$\qbit{000} + \qbit{011} + \qbit{022} + \qbit{110} + \qbit{220}$}
\\\hhline{||-|-|-|-|-||}
\multicolumn{2}{||c|}{$\qbit{\varphi_{1}}$} & \multicolumn{3}{c||}{$\qbit{000} + \qbit{011} + \qbit{022} + \qbit{101} + \qbit{212}$}
\\\hhline{||-|-|-|-|-||}
\multicolumn{2}{||c|}{$\qbit{\phi_{2}(\phi, \varphi)}$} & \multicolumn{3}{c||}{$\qbit{000} + \qbit{011} + \qbit{101} + \qbit{112} + \qbit{2\phi\varphi}$}
\\\hhline{||-|-|-|-|-||}
\multicolumn{2}{||c|}{$\qbit{\varphi_{2}(\phi, \varphi)}$} & \multicolumn{3}{c||}{$\qbit{000} + \qbit{011} + \qbit{112} + \qbit{120} + \qbit{2\phi\varphi}$}
\\\hhline{||-|-|-|-|-||}
\multicolumn{2}{||c|}{$\qbit{\phi_{3}(\phi, \varphi)}$} & \multicolumn{3}{c||}{$\qbit{000} + \qbit{011} + \qbit{120} + \qbit{101} + \qbit{2\phi\varphi}$}
\\\hhline{||-|-|-|-|-||}
\multicolumn{2}{||c|}{$\qbit{\phi_{4}}$} & \multicolumn{3}{c||}{$\qbit{000} + \qbit{011} + \qbit{101} + \qbit{112} + \qbit{202} + \qbit{221}$}
\\\hhline{||-|-|-|-|-||}
\multicolumn{2}{||c|}{$\qbit{\psi_{23}}$} & \multicolumn{3}{c||}{$\qbit{000} + \qbit{011} + \qbit{101} + \qbit{112} + \qbit{210} + \qbit{202}$}
\\\hhline{||-|-|-|-|-||}
\multicolumn{2}{||c|}{$\qbit{\phi_{5}}$} & \multicolumn{3}{c||}{$\qbit{000} + \qbit{011} + \qbit{101} + \qbit{112} + \qbit{221} + \qbit{210}$}
\\\hhline{||-|-|-|-|-||}
\multicolumn{2}{||c|}{$\qbit{s_0}$} & \multicolumn{3}{c||}{$\qbit{000} + \qbit{011} + \qbit{112} + \qbit{120} + \qbit{202} + \qbit{221}$}
\\\hhline{||-|-|-|-|-||}
\multicolumn{2}{||c|}{$\qbit{\phi_{6}}$} & \multicolumn{3}{c||}{$\qbit{000} + \qbit{011} + \qbit{112} + \qbit{120} + \qbit{221} + \qbit{210}$}
\\\hhline{||-|-|-|-|-||}
\multicolumn{2}{||c|}{$\qbit{\psi_{24}}$} & \multicolumn{3}{c||}{$\qbit{000} + \qbit{011} + \qbit{120} + \qbit{101} + \qbit{221} + \qbit{210}$}
\\\hhline{||-|-|-|-|-||}
\multicolumn{2}{||c|}{$\qbit{\phi_{7}}$} & \multicolumn{3}{c||}{$\qbit{000} + \qbit{011} + \qbit{022} + \qbit{101} + \qbit{112} + \qbit{202} + \qbit{221}$}
\\\hhline{||-|-|-|-|-||}
\multicolumn{2}{||c|}{$\qbit{\phi_{8}}$} & \multicolumn{3}{c||}{$\qbit{000} + \qbit{011} + \qbit{022} + \qbit{101} + \qbit{112} + \qbit{210} + \qbit{202}$}
\\\hhline{||-|-|-|-|-||}
\multicolumn{2}{||c|}{$\qbit{s_1}$} & \multicolumn{3}{c||}{$\qbit{000} + \qbit{011} + \qbit{022} + \qbit{101} + \qbit{112} + \qbit{221} + \qbit{210}$}
\\\hhline{||-|-|-|-|-||}
\multicolumn{2}{||c|}{$\qbit{w_0}$} & \multicolumn{3}{c||}{$\qbit{000} + \qbit{011} + \qbit{022} + \qbit{101} + \qbit{112} + \qbit{202}$}
\\\hhline{||-|-|-|-|-||}
\multicolumn{2}{||c|}{$\qbit{\varphi_{3}}$} & \multicolumn{3}{c||}{$\qbit{000} + \qbit{011} + \qbit{022} + \qbit{101} + \qbit{112} + \qbit{220}$}
\\\hhline{||-|-|-|-|-||}
\multicolumn{2}{||c|}{$\qbit{\phi_{9}}$} & \multicolumn{3}{c||}{$\qbit{000} + \qbit{011} + \qbit{022} + \qbit{101} + \qbit{112} + \qbit{221}$}
\\\hhline{|b:=:=:=:=:=:b|}
\end{tabular}
\caption{Representatives in SLOCC classes of tripartite qutrits}\label{SLOCCTable}
\end{center}
\end{table*}

\section{Commutative Frobenius Algebra}

In this section, we present the mathematical basis for our study of qutrits, most of which is based on previous research \cite{BCAK}.
First, we provide a graphical representation of symmetric monoidal categories. Then we define a CFA in such a category, and present a graphical representation of this algebra.
Finally, we describe the special states that correspond to the algebra, and then classify the CFAs.

\subsection{Graphical representation of symmetric monoidal categories}

To represent quantum systems graphically, we use the graphical representation of symmetric monoidal categories.

\begin{dfn}[Symmetric Monoidal Category]\rm
A monoidal category $M$ consists of

(i) a category $C$;

(ii) a bifunctor tensor product $\otimes: C \times C \rightarrow C$;

(iii) a unit object $e \in C$;

(iv) a natural isomorphism $\lambda_a: e \otimes a \cong a$;

(v) a natural isomorphism $\rho_a: a \otimes e \cong a$;

(vi) a natural isomorphism $\alpha_{a,b, c}: a \otimes (b \otimes c) \cong (a \otimes b) \otimes c$.

\noindent
$M$ satisfies the following two equations.
\begin{eqnarray}
\alpha_{a \otimes b, c, d} \circ \alpha_{a,b, c \otimes d} &=& (\alpha_{a,b, c} \otimes 1_a) \circ \alpha_{a, b \otimes c, d} \circ (1_a\otimes\alpha_{b,c, d})\\
\lambda_b &=& \rho_a \circ \alpha_{a,e,b}
\end{eqnarray}
If $M$ has a natural isomorphism $\gamma_{a, b}: a \otimes b \cong b \otimes a$ such that
the following three equations hold, then $M$ is called a symmetric monoidal category:
\begin{eqnarray}
\lambda_a \circ \gamma_{a, e} &=& \rho_a\\
\gamma_{b, a} \circ \gamma_{a, b} &=& 1_{a \otimes b}\\
\alpha_{a, c, b} \circ (1_a \otimes \gamma_{b, c}) \circ \alpha^{-1}_{a, b, c} &=& (\gamma_{c, a} \otimes 1_{b}) \circ \alpha_{c, a, b} \circ \gamma_{a\otimes b, c}
\end{eqnarray}.
\end{dfn}

\begin{thm}[Coherence Theorem \cite{MacLane}]
The coherence theorem states that any diagram that is composed of the maps $\alpha$, $\lambda$, $\rho$, and $\gamma$, and their tensor products, commutes.
\end{thm}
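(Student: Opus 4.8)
The plan is to reproduce Mac Lane's classical argument, adjusting it to accommodate the symmetry $\gamma$. The first move is to pass to a universal setting: let $F$ be the free symmetric monoidal category on a countably infinite set of generating objects $x_1, x_2, \dots$, whose objects are all formal iterated tensor products of the $x_i$ and the unit $e$ with arbitrary bracketing, and whose morphisms are generated under $\circ$ and $\otimes$ by identities together with the components of $\alpha^{\pm 1}$, $\lambda^{\pm 1}$, $\rho^{\pm 1}$, $\gamma$, modulo the defining equations of a symmetric monoidal category plus bifunctoriality and naturality. Any diagram in an arbitrary symmetric monoidal category $M$ that is built from these structural isomorphisms and their tensor products is the image, under a strict symmetric monoidal functor $F \to M$ sending generators to the leaf objects that occur, of a diagram in $F$; so it suffices to show that in $F$ any two parallel structural morphisms agree. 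Here one must read ``the diagram commutes'' in the standard way, namely that corresponding objects carry the same generators in the same slots and that the two legs induce the same permutation of those generators --- without this proviso the statement is simply false, since, e.g., $\gamma_{x_1, x_1} \ne 1_{x_1 \otimes x_1}$ in typical models.

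For each object $w$ of $F$, let its normal form $\overline w$ be the left-bracketed tensor product of the generators occurring in $w$, taken in their left-to-right order, with all copies of $e$ deleted; using $\lambda$, $\rho$, and $\alpha$ one builds a canonical isomorphism $\mathrm{can}_w : w \to \overline w$. The proof then has two halves. (a) \emph{Monoidal coherence.} Inside the subcategory generated by $\alpha^{\pm 1}$, $\lambda^{\pm 1}$, $\rho^{\pm 1}$ alone, one proves by induction on the number of tensor symbols --- using the pentagon equation to re-bracket and the triangle equation to absorb units --- that every morphism $w \to w'$ between objects with the same underlying sequence of generators equals $\mathrm{can}_{w'}^{-1} \circ \mathrm{can}_w$, hence is unique. (b) \emph{Incorporating the symmetry.} Every morphism $f : w \to w'$ of $F$ has an underlying permutation $\pi(f)$ of the positions of the generators, and $\pi$ is a functor to the groupoid of finite sets and bijections. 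One shows each $f$ can be rewritten as $\mathrm{can}_{w'}^{-1} \circ \widetilde\sigma \circ \mathrm{can}_w$, where $\sigma = \pi(f)$ and $\widetilde\sigma : \overline w \to \overline{w'}$ is a composite of adjacent symmetries $s_1, \dots, s_{k-1}$ on the left-bracketed normal form that realizes $\sigma$. The remaining point is that $\widetilde\sigma$ depends only on $\sigma$: the $s_i$ satisfy $s_i^2 = \mathrm{id}$ (from $\gamma_{b,a} \circ \gamma_{a,b} = 1_{a \otimes b}$), the braid relation $s_i s_{i+1} s_i = s_{i+1} s_i s_{i+1}$ (from the hexagon equation, combined with part (a) to normalize the intervening associators), and $s_i s_j = s_j s_i$ for $|i-j| \ge 2$ (from bifunctoriality and naturality of $\gamma$); these are precisely the Coxeter relations presenting $S_k$, so any two words in the $s_i$ representing $\sigma$ give the same $\widetilde\sigma$. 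Combining (a) and (b), $f = \mathrm{can}_{w'}^{-1} \circ \widetilde\sigma \circ \mathrm{can}_w$ depends only on $w$, $w'$, and $\sigma$, which is the assertion.

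The main obstacle is part (b), and specifically the verification that no relations beyond the Coxeter relations of the symmetric group persist among the symmetries once one allows arbitrary bracketings. Pushing a word of $\gamma$'s past associators and re-bracketing could in principle create extra identifications; ruling this out is exactly the role of the hexagon equation, and the coherence already established in (a) must be invoked repeatedly to collapse all the associativity constraints that appear along the way. After the presentation of $S_k$ is secured, what remains is routine bookkeeping. I would also flag at the outset the precise convention under which the theorem holds (matching decorations on corresponding objects, and equal induced permutations on the two legs), since in the symmetric --- as opposed to the merely monoidal --- case the unqualified reading of ``every diagram commutes'' is not correct.
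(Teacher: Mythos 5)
The paper offers no proof of this theorem---it is quoted directly from Mac Lane---and your sketch is precisely the classical Mac Lane argument (reduce to the free symmetric monoidal category, establish monoidal coherence by normal forms via the pentagon and triangle, then collapse the symmetries to the Coxeter presentation of $S_k$ using the hexagon, naturality, and the already-established associativity coherence), so it is correct in outline and matches the source the paper defers to. Your observation that the unqualified statement ``every such diagram commutes'' is false without the proviso that the two legs induce the same permutation of the leaf objects (e.g.\ $\gamma_{a,a}\neq 1_{a\otimes a}$ in $\mathbf{FdHilb}$) is also right, and points to an imprecision in the paper's formulation rather than a gap in your argument.
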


This means that any two objects that are tensor products of $a_1,\ldots, a_n$ and $e$ can be identified, even if they differ in bracketing, or in the number of position of $e$'s.

\begin{exm}
$\mathbf{FdHilb}$ is a monoidal category whose objects are finite dimensional Hilbert spaces, and whose arrows are all linear functions, with the usual tensor product and the unit object $\C$.
\end{exm}

Arrows in monoidal categories can be expressed graphically \cite{Selinger09Survey}.
We assume a flow from top to bottom.
An object $a$ is written as a line.
\begin{equation}
\gph{-3.2}{3.2}{2.5}{0.34}{linea}
\end{equation}
The unit object $e$ is expressed as no wire.
The tensor product $a \otimes b$ is written as two lines.
\begin{equation}
\gph{-4}{4}{4}{0.39}{twolineab}
\end{equation}

An arrow $f: a_1 \otimes \cdots \otimes a_n \rightarrow b_1 \otimes \cdots \otimes b_m$ is written as a box labeled $f$ from input wires $a_1,\ldots,a_n$ to output wires $b_1,\ldots,b_m$. 
\begin{equation}
\gph{-6.5}{5.8}{5}{0.39}{box}
\end{equation}
$e$ is not written graphically; hence, according to the coherence theorem, $\alpha_{a,b,c}$, $\lambda_a$, and $\rho_a$ are graphically expressed in the same way as identity arrows.
An identity arrow $1_a$ is written as a wire $a$.
\begin{equation}
\gph{-3.2}{3.2}{2.5}{0.34}{linea}
\end{equation}
The composition $g \circ f$ of arrows $f: a_1 \otimes \cdots \otimes a_n \rightarrow b_1 \otimes \cdots \otimes b_m$ and $g: b_1 \otimes \cdots \otimes b_m \rightarrow c_1 \otimes \cdots \otimes c_l$ is expressed as a vertical juxtaposition.
\begin{equation}
\gph{-10}{9}{8.5}{0.39}{verbox}
\end{equation}
Two wires can be connected only if they are labeled by the same object.
The tensor product $f \otimes g: a \otimes b \rightarrow c \otimes d$ of arrows $f: a \rightarrow c$ and $g: b \rightarrow d$ is expressed as a horizontal juxtaposition.
\begin{equation}
\gph{-4.5}{5}{4}{0.39}{horbox}
\end{equation}
Furthermore, the natural isomorphism $\gamma_{a,b}$ of symmetric monoidal categories is expressed as an intersection.
\begin{equation}
\gph{-5}{7}{4}{0.34}{intersection}
\end{equation}
Using these expressions, all arrows of monoidal categories can be represented graphically.
Generally, a state vector $\qbit{\psi}$ of a state space $a$ can be considered a function from $\C$ to $a$.
Specifically, emphasizing no-input, $\qbit{\psi}$ is written as a triangle.
\begin{equation}
\gph{-3}{3.5}{2}{0.4}{triangle}
\end{equation}
$\qeff{\psi}$ is written as a reversed triangle.
\begin{equation}
\gph{-3}{3.5}{2}{0.4}{reversedtriangle}
\end{equation}

\subsection{Commutative Frobenius Algebra}
Using the graphical representation of monoidal categories, all arrows of monoidal categories can be represented graphically.
Therefore, if systems correspond to arrows of monoidal categories, then any system can be represented graphically.
Previous research \cite{BCAK} suggests that some kinds of system strictly correspond to a specific kind of algebra.

\begin{dfn}[Commutative Frobenius Algebra]\rm
A {\em Frobenius algebra} $F$ in a monoidal category $M$ consists of: 
\begin{enumerate}
\item[(i)] an object $a \in M$;
\item[(ii)] multiplication $\mu: a \otimes a \rightarrow a$;
\item[(iii)] a unit $\eta: e \rightarrow a$;
\item[(iv)] a comultiplication $\delta: a \rightarrow a \otimes a$;
\item[(v)] a counit: $\epsilon: a \rightarrow e$, where $e$ is the unit object of $M$;
\end{enumerate}
such that $F$ makes the following equations hold:
\begin{eqnarray}
(\delta \otimes 1_a) \circ \delta &=& \alpha_{a,a,a} \circ (1_a \otimes \delta) \circ \delta\\
\lambda_a \circ (\epsilon \otimes 1_a) \circ \delta &=& 1_a = \rho_a \circ (1_a \otimes \epsilon) \circ \delta\\
\mu \circ (\mu \otimes 1_a) \circ \alpha_{a,a,a} &=& \mu \circ (1_a \otimes \mu)\\
\mu \circ (\eta \otimes 1_a) \circ \lambda^{-1}_a &=& 1_a = \mu \circ (1_a \otimes \eta) \circ \rho^{-1}_a\\
(1_a \otimes \mu) \circ \alpha^{-1}_{a,a,a} \circ (\delta \otimes 1_a)
&=&
\delta \circ \mu
=
(\mu \otimes 1_a) \circ \alpha_{a,a,a} \circ (1_a \otimes \delta)
\end{eqnarray}
If $M$ is a symmetric monoidal category and $F$ satisfies the following equations, then $F$ is called a {\em commutative Frobenius algebra (CFA)}: 
\begin{eqnarray}
\mu \circ \gamma_{a,a} &=& \mu\\
\gamma_{a,a} \circ \delta &=& \delta
\end{eqnarray}
\end{dfn}

\begin{dfn}[$F$-graph]\rm
An $F$-graph of a CFA $F$ is an arrow that is composed of $\mu$, $\delta$, $\epsilon$, $\delta$, $\alpha_{a,a,a}$, $\rho_a$, $\lambda_a$, $\gamma_{a,a}$, $1_a$, and their tensor products.
\end{dfn}
The domain-codomain pairs of all components of a CFA differ from each other. Therefore, without labeling each arrow, we can present a CFA graphically, as follows.
\begin{eqnarray}
\begin{array}{cccc}
\mu = \gmu & \eta = \geta & \delta = \gdelta & \epsilon = \gepsilon
\end{array}
\end{eqnarray}

Using the representation of CFAs and monoidal categories, any $F$-graph can be represented graphically. 
Of course, the axioms of CFAs can be expressed graphically as follows. 
\begin{equation}
\begin{array}{l}
\gph{-2}{4}{2}{0.3}{leftmumu} = \gph{-2}{4}{2}{0.3}{rightmumu}
\hspace{20pt}
\gph{-2}{4}{2}{0.3}{leftetamu} = \gph{-3}{4}{2}{0.3}{line} = \gph{-2}{4}{2}{0.3}{rightetamu}
\hspace{20pt}
\gph{-2}{4}{2}{0.3}{leftdeldel} = \gph{-2}{4}{2}{0.3}{rightdeldel}
\hspace{20pt}
\gph{-2}{4}{2}{0.3}{leftdelep} = \gph{-3}{4}{2}{0.3}{line} = \gph{-2}{4}{2}{0.3}{rightdelep}
\\
\\
\hspace{60pt}\gph{-2}{4}{2}{0.3}{leftdelmu} = \gph{-2}{4}{2}{0.3}{mudel} = \gph{-2}{4}{2}{0.3}{rightdelmu}
\hspace{30pt}
\gph{-2}{4}{2}{0.3}{mu} = \gph{-2}{4}{2}{0.3}{symmu}
\hspace{20pt}
\gph{-0.5}{2}{3}{0.3}{delta} = \gph{-3}{2}{3}{0.3}{symdel}
\end{array}
\end{equation}
Because of these axioms, the representation of an $F$-graph is determined by its topological properties.

\begin{thm}[\cite{BCAK}]
Any two $F$-graphs $f$ and $g$ whose graphical representations are connected and which have the same numbers of inputs, outputs, and loops, are in fact the same. The number of loops represents the maximum number of wires that can be removed without destroying the connections of the representation.
\end{thm}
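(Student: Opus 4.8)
The plan is to exhibit a \emph{normal form} for connected $F$-graphs that depends only on the triple (number of inputs $m$, number of outputs $n$, number of loops $k$), and to show that every connected $F$-graph can be rewritten to it using the CFA axioms. Write $\mu^{(m)}: a^{\otimes m}\to a$ for an iterated multiplication of $m$ wires, with the conventions $\mu^{(1)}=1_a$ and $\mu^{(0)}=\eta$; dually $\delta^{(n)}: a\to a^{\otimes n}$ with $\delta^{(1)}=1_a$ and $\delta^{(0)}=\epsilon$. Associativity, unitality and commutativity of $\mu$ make $\mu^{(m)}$ independent of the shape of the binary tree and of the ordering of the inputs, and dually for $\delta^{(n)}$. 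Setting $h:=\mu\circ\delta: a\to a$ to be the ``handle'' morphism, define
\begin{equation}
N_{m,n,k} := \delta^{(n)}\circ h^{k}\circ\mu^{(m)} : a^{\otimes m}\to a^{\otimes n},
\end{equation}
whose graphical representation is connected with $m$ inputs, $n$ outputs and exactly $k$ loops.

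The core of the argument is to show, by induction on the number of generators $\mu,\eta,\delta,\epsilon$ occurring in an $F$-graph expression for a connected arrow $f$, that $f=N_{m,n,k}$ when $m,n,k$ are the input, output and loop counts of $f$. The driving observations: (a) each CFA axiom, read as a rewrite rule, preserves connectedness and the counts $(m,n,k)$, since topologically the two sides are related by a homeomorphism of the underlying surface fixing the boundary; (b) the Frobenius law $\delta\circ\mu = (\mu\otimes 1_a)\circ(1_a\otimes\delta) = (1_a\otimes\mu)\circ(\delta\otimes 1_a)$ lets one migrate every multiplication ``above'' every comultiplication, so that $f$ is brought to a layer of $\mu$'s and $\eta$'s, followed by a handle region, followed by a layer of $\delta$'s and $\epsilon$'s; (c) within the top layer, (co)associativity and (co)commutativity collapse all multiplications into a single tree $\mu^{(m)}$ — here connectedness is essential, as it forbids the tree from splitting — and dually the bottom layer collapses to $\delta^{(n)}$; (d) every $\eta$ that meets a multiplication is absorbed by the unit law and every $\epsilon$ by the counit law, so that the only surviving closed features are handles $h$, each accounting for one loop, whence the middle region is exactly $h^{k}$.

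Given connected $F$-graphs $f$ and $g$ with the same $(m,n,k)$, the reduction gives $f = N_{m,n,k} = g$, which is the claim. (It is worth remarking that the $N_{m,n,k}$ are pairwise distinct, which is what makes the loop count a genuine invariant; this follows from the standard identification of the free commutative Frobenius algebra on one generator with the symmetric monoidal category $\mathrm{2Cob}$ of oriented $2$-dimensional cobordisms, where $m$ and $n$ count boundary circles and $k$ is the genus. This direction is not needed for the statement, only for well-posedness.)

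The main obstacle is step (b)–(c): making precise that the multiplication/comultiplication migration \emph{terminates} and always reaches the single-tree form. A naive generator count need not strictly decrease under the Frobenius rewrites, so one must choose a subtler termination measure — for instance a lexicographic measure combining the number of $\delta$-above-$\mu$ adjacencies with the generator count — or instead argue by induction on the loop number after first establishing a ``$\mu$'s-then-$\delta$'s'' prenex form by a confluence argument. The corner cases $m=0$ and $n=0$ also need care: isolated $\eta$'s and $\epsilon$'s must be folded correctly into $\mu^{(0)}$, $\delta^{(0)}$ or into handles rather than producing spurious disconnected circles, and it is precisely connectedness of $f$ that rules out the disconnected circle $\epsilon\circ\eta$.
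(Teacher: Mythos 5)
The paper does not prove this statement: it is imported verbatim from \cite{BCAK} (it is the ``spider theorem'' for commutative Frobenius algebras), so there is no in-paper argument to compare yours against. Measured against the standard proof in the cited literature, your strategy is the right one: reduce every connected $F$-graph to a normal form $N_{m,n,k}=\delta^{(n)}\circ h^{k}\circ\mu^{(m)}$ determined by the invariants $(m,n,k)$, using the Frobenius law to stratify the diagram into a $\mu$-layer, a handle region, and a $\delta$-layer, and (co)associativity, (co)commutativity and (co)unitality to collapse each layer. This is essentially the classification of connected $2$-cobordisms by boundary data and genus, and your parenthetical remark identifying the free CFA on one generator with $\mathrm{2Cob}$ is exactly the reason the $N_{m,n,k}$ are pairwise distinct, i.e.\ why the loop count is a well-defined invariant.

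That said, what you have written is a strategy rather than a proof, and you say so yourself: the entire technical content of the theorem sits in your step (b)--(c), namely that the Frobenius rewrites terminate in the prenex ``$\mu$'s-then-$\delta$'s'' form and that the residual closed structure consists only of handles. A plain induction on the number of generators does not go through, because the Frobenius law preserves the generator count and (co)associativity moves do not decrease any naive complexity measure; one genuinely needs either a lexicographic termination measure of the kind you gesture at, or the topological route (triangulate the underlying surface-with-boundary and invoke the classification of compact connected surfaces, which is how \cite{BCAK} and the 2Cob literature actually discharge this step). You should also verify explicitly that the two sides of every CFA axiom have homeomorphic underlying surfaces, since that is what makes the loop count invariant under rewriting; you assert this in (a) but it is a finite check worth doing. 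Until one of these routes is carried out, the proof has a real gap at its center, albeit one you have located accurately.
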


Here, for simplification, we provide notations for some $F$-graphs. A notation (called a {\em spider notation} in previous work \cite{BCAK}) is used for $F$-graphs that do not have any loops. An $F$-graph that has $m$ inputs and $n$ outputs is written as follows.
\begin{equation}
\gph{-12}{10}{11}{0.40}{spider}
\end{equation}
An $F$-graph that has no inputs or outputs is written as follows.
\begin{equation}
\gph{-5}{5}{5}{0.35}{noinput}
\hspace{80pt}
\gph{-4}{5}{5}{0.35}{nooutput}
\end{equation}
Moreover, some $F$-graphs that have exactly one loop have special notations. An $F$-graph that has one loop, no input, and one output is written as follows.
\begin{equation}
\gph{-2}{1.5}{1.5}{0.5}{circleeta}
\end{equation}
Similarly, an $F$-graph that has one loop, one input, and no outputs is expressed as follows.
\begin{equation}
\gph{-2}{1.5}{1.5}{0.5}{circleepsilon}
\end{equation}
Finally, an $F$-graph that has one loop and no inputs or outputs is expressed as a circle.
\begin{equation}
\gph{-2}{1.5}{1.5}{0.5}{circle}
\end{equation}

\subsection{Frobenius States}
CFAs correspond to Frobenius states, which require strong SLOCC-maximality and symmetry properties.

\begin{dfn}[Strong SLOCC-maximality]\rm
Let $\qbit{\Psi}$ be a tripartite state.
If there are $\qeff{\Phi_i}$ and $\qeff{\xi_i}$ such that the following holds, then $\qbit{\Psi}$ is strongly SLOCC-maximal:
\begin{equation}
\gphh{-8}{8}{8}{16}{max1}
=
\gphh{-8}{8}{8}{16}{max2}
=
\gphh{-8}{8}{8}{16}{max3}
=
\gphh{-8}{8}{8}{16}{linemax}
\end{equation}
\end{dfn}
The following theorem shows the relation between SLOCC-maximality and strong SLOCC-ma\-xi\-ma\-li\-ty.
\begin{thm}[\cite{BCAK}]
Let $\qbit{\Psi}$ be a tripartite symmetric state. If $\qbit{\Psi}$ is strongly SLOCC-maximal, then $\qbit{\Psi}$ is SLOCC-maximal.
\end{thm}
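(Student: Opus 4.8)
\emph{Proof strategy.}
The plan is to reduce SLOCC-maximality of $\qbit{\Psi}$ to invertibility of the local operators occurring in a conversion, and then to extract that invertibility from strong SLOCC-maximality. Suppose $\qbit{\Phi}$ can be converted into $\qbit{\Psi}$ by SLOCC. It is standard that such a conversion amounts to applying local operators $L_1,L_2,L_3$ that need \emph{not} be invertible, so that $\qbit{\Psi}=(L_1\otimes L_2\otimes L_3)\qbit{\Phi}$. By the matrix characterization of SLOCC-equivalence recalled above, it then suffices to show that each $L_i$ is invertible: if so, $\qbit{\Phi}=(L_1^{-1}\otimes L_2^{-1}\otimes L_3^{-1})\qbit{\Psi}$ exhibits $\qbit{\Phi}$ as SLOCC-equivalent to $\qbit{\Psi}$, and as $\qbit{\Phi}$ was an arbitrary state converting into $\qbit{\Psi}$, this is exactly SLOCC-maximality.

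First I would rewrite the defining equation of strong SLOCC-maximality as the statement that capping one leg of $\qbit{\Psi}$ with a suitable effect built from the prescribed $\qeff{\Phi_i},\qeff{\xi_i}$ leaves a maximally entangled bipartite state (a Bell-type state, drawn as a bent wire) on the other two legs, hence one of full Schmidt rank. Capping the first leg of both sides of $\qbit{\Psi}=(L_1\otimes L_2\otimes L_3)\qbit{\Phi}$ with that effect precomposed with $L_1$, the left-hand side becomes the full-rank bent wire on legs $2$ and $3$, while the right-hand side is $L_2\otimes L_3$ applied to a fixed bipartite vector assembled from $\qbit{\Phi}$, $L_1$ and the effect. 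Since local operators cannot increase the Schmidt rank of a bipartite vector, and the bent wire has maximal Schmidt rank, $L_2$ and $L_3$ must each be surjective, hence --- the local spaces being of fixed finite dimension --- invertible. Applying the same computation with the capping effect on the second leg, and then on the third, yields the invertibility of $L_1$ as well (and re-confirms that of $L_2,L_3$). This completes the reduction, and hence the proof.

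The hypothesis that $\qbit{\Psi}$ is symmetric is what makes the three leg-permuted instances of the collapse equation available from the data of the definition: up to permutation symmetry a single choice of effects serves all three legs, and symmetry guarantees that the bipartite state left after capping is the same --- in particular still maximally entangled --- no matter which leg is capped. The step I expect to be the genuine obstacle is this translation of the graphical strong-SLOCC-maximality picture into the algebraic fact that the resulting two-party state has full Schmidt rank; one must be careful that the picture really forces a maximally entangled state rather than merely a rank inequality pointing the wrong way. Everything downstream --- Schmidt-rank monotonicity, surjectivity, invertibility in fixed dimension, and the appeal to the matrix characterizations of SLOCC and of SLOCC-equivalence --- is then routine.
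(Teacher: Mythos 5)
Your argument is correct. Note that the paper itself offers no proof of this statement --- it is imported verbatim from \cite{BCAK} --- so there is nothing internal to compare against; but your route is the standard one and matches the spirit of the cited proof. The key translation you worry about is sound: the strong SLOCC-maximality equation asserts that a one-input, one-output composite built from $\qbit{\Psi}$, $\qeff{\xi_i}$ and $\qeff{\Phi_i}$ equals the identity wire, and since a one-sided inverse between finite-dimensional spaces of equal dimension is two-sided, the bipartite state $(\qeff{\xi_i}\otimes 1\otimes 1)\qbit{\Psi}$ really does have full Schmidt rank (maximal rank, though not necessarily equal Schmidt coefficients --- which is all your rank argument needs). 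From there, writing $\qbit{\Psi}=(L_1\otimes L_2\otimes L_3)\qbit{\Phi}$, capping leg $1$ gives $L_2 V L_3^{T}$ of full rank for some matrix $V$, forcing $L_2,L_3$ invertible, and capping leg $2$ forces $L_1$ invertible; the matrix characterization of SLOCC-equivalence then finishes the proof.

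One small correction: the symmetry hypothesis is not actually what supplies the three leg-permuted instances. The definition of strong SLOCC-maximality in the paper already demands effects $\qeff{\xi_i},\qeff{\Phi_i}$ and the collapse-to-identity equation for all three legs $i=1,2,3$; symmetry only enters in the subsequent Proposition, which says that for symmetric states a single pair $(\qeff{\xi},\qeff{\Phi})$ suffices. So your proof goes through without the symmetry assumption at all (two of the three capping equations already yield invertibility of all three $L_i$); the hypothesis is harmless but not load-bearing.
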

\begin{dfn}[Symmetric State]\rm
Let $\qbit{\Psi}$ be a tripartite state.
If $\qbit{\Psi}$ satisfies the following equations, then $\qbit{\Psi}$ is a symmetric state: 
\begin{equation}
\gphh{-7}{5}{6}{13}{sym1}
=
\gphh{-7}{5}{6}{13}{sym2}
=
\gphh{-7}{5}{6}{13}{sym3}
\label{symmetry}
\end{equation}
\end{dfn}
$N$-partite symmetric states are defined similarly.
For symmetric states, the definition of strong SLOCC-maximality can be rewritten simply as the following proposition.

\begin{prp}[\cite{BCAK}]
Let $\qbit{\Psi}$ be a tripartite symmetric state.
$\qbit{\Psi}$ is strongly SLOCC-maximal
iff there are $\qeff{\Phi}$ and $\qeff{\xi}$ such that 
\begin{equation}
\gphh{-8}{8}{8}{16}{maxsymmetry}
=
\gphh{-8}{8}{8}{16}{linemax}
\label{sloccmax}
\end{equation}
\end{prp}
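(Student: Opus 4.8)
The plan is to handle the two directions of the ``iff'' separately; the forward implication is immediate, and the symmetry hypothesis does all of the work in the converse.

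\textbf{($\Rightarrow$).} If $\qbit{\Psi}$ is strongly SLOCC-maximal then, by definition, there are effects $\qeff{\Phi_i},\qeff{\xi_i}$ (for $i=1,2,3$) making the three ``max'' diagrams all equal to the bare wire. The first of these equalities is, read literally, equation~(\ref{sloccmax}) with $\qeff{\Phi}:=\qeff{\Phi_1}$ and $\qeff{\xi}:=\qeff{\xi_1}$, so there is nothing to prove.

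\textbf{($\Leftarrow$).} Suppose $\qeff{\Phi}$ and $\qeff{\xi}$ satisfy~(\ref{sloccmax}). The three ``max'' diagrams in the definition of strong SLOCC-maximality are built from $\qbit{\Psi}$ together with the two effects, and differ \emph{only} in which of the three legs of $\qbit{\Psi}$ is singled out; passing from one of them to another thus amounts to composing $\qbit{\Psi}$ with a permutation $\gamma$ of its output legs. The symmetry equations~(\ref{symmetry}) state that $\qbit{\Psi}$ agrees with the states obtained from it by swapping (suitable) pairs of its output legs, and such swaps generate a group acting transitively on the three legs. Hence I may insert the relevant permutation of output legs on the left-hand side of~(\ref{sloccmax}) and absorb it into $\qbit{\Psi}$, while the right-hand side is a single wire and so is left unchanged by that permutation. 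This transports~(\ref{sloccmax}) to the two remaining choices of singled-out leg, and taking $\qeff{\Phi_i}:=\qeff{\Phi}$ and $\qeff{\xi_i}:=\qeff{\xi}$ for all $i$ then exhibits $\qbit{\Psi}$ as strongly SLOCC-maximal.

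\textbf{The main obstacle} is not conceptual but diagrammatic bookkeeping, and it splits into two small checks. First, one must confirm that the equalities packed into~(\ref{symmetry}) generate a large enough subgroup of the permutations of the three legs — at least a $3$-cycle — so that the singled-out leg can be rotated into every position. Second, once $\gamma$ has been absorbed into $\qbit{\Psi}$, one must verify that what remains is \emph{on the nose} the ``max$i$'' picture of the definition, in particular that any crossing $\gamma$ induces between the wires carrying $\qeff{\Phi}$ and $\qeff{\xi}$ can be removed; since those effects sit in symmetric positions and $\gamma_{a,a}\circ\gamma_{a,a}=1_{a\otimes a}$, this is routine. Modulo this bookkeeping, the proposition is precisely the statement that symmetry of $\qbit{\Psi}$ collapses the three conditions defining strong SLOCC-maximality into the single equation~(\ref{sloccmax}).
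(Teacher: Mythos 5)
The paper does not actually prove this proposition; it is imported verbatim from \cite{BCAK}, so there is no in-text argument to compare yours against. Judged on its own, your reconstruction is correct and is the natural one. The forward direction is indeed just specialization: one of the three equalities in the definition of strong SLOCC-maximality already has the shape of~(\ref{sloccmax}), and even if the diagram \emph{maxsymmetry} were drawn with a different leg singled out, the symmetry hypothesis (which is part of the proposition's standing assumptions) closes that gap. The converse is exactly as you describe: the three ``max'' diagrams differ only by which leg of $\qbit{\Psi}$ plays which role, the symmetry equations~(\ref{symmetry}) let you pre-compose $\qbit{\Psi}$ with any permutation of its legs without changing it, and the two permutations appearing in~(\ref{symmetry}) act transitively on the three legs (in \cite{BCAK} the symmetry condition is full $S_3$-invariance, so this is automatic). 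Taking $\qeff{\Phi_i}=\qeff{\Phi}$ and $\qeff{\xi_i}=\qeff{\xi}$ for all $i$ is legitimate since the definition only asks for existence of \emph{some} such family. The two ``bookkeeping'' checks you flag are the right ones, and both go through: transitivity as above, and elimination of the residual crossings by naturality of $\gamma$ together with $\gamma_{a,a}\circ\gamma_{a,a}=1_{a\otimes a}$. No gap.
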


For any strongly SLOCC-maximal and tripartite symmetric state $\qbit{\Psi}$, if $\qeff{\xi}$ is given, then $\qeff{\Phi}$ is uniquely determined. The converse is also true.

\begin{prp}[\cite{BCAK}]\label{uniqueness}
Let $\qbit{\Psi}$ be a strongly SLOCC-maximal and tripartite symmetric state. Each of a pair of states $\qeff{\Phi}$ and $\qeff{\xi}$ satisfying (\ref{sloccmax}) is uniquely determined by the other.
\end{prp}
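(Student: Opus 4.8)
The plan is to prove that, for a fixed strongly SLOCC-maximal symmetric state $\qbit{\Psi}$, the pairs $(\qeff{\Phi},\qeff{\xi})$ solving (\ref{sloccmax}) form the graph of a bijection between the admissible $\qeff{\Phi}$'s and the admissible $\qeff{\xi}$'s. Since $\qbit{\Psi}$ is symmetric, equations (\ref{symmetry}) let its three legs be permuted freely, and composing (\ref{sloccmax}) with $\gamma_{a,a}$ on the two legs carrying $\qeff{\Phi}$ and $\qeff{\xi}$ interchanges their roles; hence it suffices to prove one of the two implications, say that a fixed $\qeff{\xi}$ has at most one partner $\qeff{\Phi}$. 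The mechanism is linearity: $\qeff{\Phi}$ occurs on a single wire of the diagram of (\ref{sloccmax}), and $\circ$ and $\otimes$ are bilinear, so the left-hand side of (\ref{sloccmax}) is $\C$-linear in $\qeff{\Phi}$ with $\qeff{\xi}$ held fixed. It is therefore enough to show that this linear operation on effects has trivial kernel.

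So suppose $(\qeff{\Phi},\qeff{\xi})$ and $(\qeff{\Phi'},\qeff{\xi})$ both satisfy (\ref{sloccmax}). Subtracting the two instances, the diagram of (\ref{sloccmax}) with $\qeff{\Phi}-\qeff{\Phi'}$ in place of $\qeff{\Phi}$ equals the difference of two copies of its right-hand side, namely $0$. I would then read this homogeneous equation as saying that $\qeff{\Phi}-\qeff{\Phi'}$, plugged onto a wire of the gadget $G_{\xi}$ obtained from $\qbit{\Psi}$ by attaching $\qeff{\xi}$ to another leg, is the zero morphism, and cancel $G_{\xi}$. The cancellability needed here is exactly what strong SLOCC-maximality supplies: (\ref{sloccmax}) with $(\qeff{\Phi},\qeff{\xi})$ already exhibits $G_{\xi}$ — or a reshaping of it to an endomorphism of the finite-dimensional object $a$ — as having a one-sided inverse, hence as invertible; composing the homogeneous equation with that inverse strips $G_{\xi}$ off and leaves $\qeff{\Phi}-\qeff{\Phi'}$ alone on a wire, equal to $0$. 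Thus $\qeff{\Phi}=\qeff{\Phi'}$, and the remaining direction follows by the symmetry reduction of the first paragraph.

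The main obstacle is this cancellation step. Diagrammatically one must read off, from (\ref{sloccmax}) and its leg-permuted variants together with the defining equations of strong SLOCC-maximality, that the relevant ``$\qbit{\Psi}$-with-one-effect'' morphism is an invertible square map and that the wire carrying $\qeff{\Phi}-\qeff{\Phi'}$ is the one left free by the cancellation. The semantic content behind this is that a SLOCC-maximal (Definition~\ref{SLOCCmaximal}) tripartite state in $\mathbf{FdHilb}$ has full Schmidt rank across every single-leg cut — which applies here because strong SLOCC-maximality implies SLOCC-maximality by the earlier theorem — so that plugging an effect into any single leg of $\qbit{\Psi}$ is an injective operation; carrying this out purely in the graphical calculus, rather than in coordinates, while keeping track of which leg is bent to an input in each occurrence of (\ref{sloccmax}), is where the real work lies.
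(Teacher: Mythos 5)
Your argument for the direction ``$\qeff{\xi}$ fixed $\Rightarrow$ $\qeff{\Phi}$ unique'' is sound, but it is a genuinely different route from the paper's. The paper never invokes linearity, kernels, or invertibility of a matrix: it runs the classical ``uniqueness of a two-sided inverse'' sandwich. Writing $S(\xi,\Phi)$ for the endomorphism of $a$ on the left of (\ref{sloccmax}), it computes $\qeff{\xi'}=\qeff{\xi'}\circ S(\xi,\Phi)$, uses the symmetry (\ref{symmetry}) of $\qbit{\Psi}$ to exchange the legs carrying $\qeff{\xi}$ and $\qeff{\xi'}$ inside the resulting diagram, reads the result as $\qeff{\xi}\circ S(\xi',\Phi)=\qeff{\xi}$, and argues dually for $\qeff{\Phi}$. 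That argument is purely diagrammatic and works in any symmetric monoidal category; yours reshapes $\qbit{\Psi}$ with $\qeff{\xi}$ attached into a square matrix and uses that a one-sided inverse of an endomorphism of a finite-dimensional space is two-sided. In $\mathbf{FdHilb}$ both are correct, and your cancellation step is legitimate: (\ref{sloccmax}) itself exhibits the left inverse, so the gadget is invertible and $\qeff{\Phi}-\qeff{\Phi'}$ composed with it being zero forces $\qeff{\Phi}=\qeff{\Phi'}$.

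The gap is in your first paragraph. The claim that ``composing (\ref{sloccmax}) with $\gamma_{a,a}$ on the two legs carrying $\qeff{\Phi}$ and $\qeff{\xi}$ interchanges their roles'' cannot be right: $\qeff{\Phi}$ is an effect on $a\otimes a$ and $\qeff{\xi}$ an effect on $a$, so no permutation of wires swaps them, and the second half of the proposition does not follow from the first by any such symmetry. It needs its own argument, and it is not quite a mirror image of the one you gave. Fixing $\qeff{\Phi}$ and supposing $(\qeff{\Phi},\qeff{\xi})$ and $(\qeff{\Phi},\qeff{\xi'})$ both satisfy (\ref{sloccmax}), your cancellation (now cancelling the invertible map built from $\qeff{\Phi}$) only yields that the two bipartite states $(\qeff{\xi}\otimes 1\otimes 1)\qbit{\Psi}$ and $(\qeff{\xi'}\otimes 1\otimes 1)\qbit{\Psi}$ coincide; to conclude $\qeff{\xi}=\qeff{\xi'}$ you must additionally show that $\qeff{\xi}\mapsto(\qeff{\xi}\otimes 1\otimes 1)\qbit{\Psi}$ is injective, i.e.\ that the single-leg marginal of $\qbit{\Psi}$ has full rank. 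That does follow from strong SLOCC-maximality, but it is an extra step your proposal neither states nor proves, so as written the ``$\qeff{\Phi}$ determines $\qeff{\xi}$'' half is unestablished.
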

\begin{proof}
Suppose ($\qbit{\Psi}$, $\qeff{\Phi}$, $\qeff{\xi}$) and ($\qbit{\Psi}$, $\qeff{\Phi}$, $\qeff{\xi'}$) satisfy (\ref{sloccmax}).
\begin{equation}
\gph{-4}{4}{4}{0.4}{pxi}
= \gph{-8}{8}{8}{0.4}{xipxiphi}
= \gph{-8}{8}{8}{0.4}{alphaxipxiphi}
= \gph{-8}{8}{8}{0.4}{pxixiphi}
= \gph{-4}{4}{4}{0.4}{xi}
\end{equation}
Similarly, assume ($\qbit{\Psi}$, $\qeff{\Phi}$, $\qeff{\xi}$) and ($\qbit{\Psi}$, $\qeff{\Phi'}$, $\qeff{\xi}$) satisfy (\ref{sloccmax}).
\begin{equation}
\gph{-4}{4}{4}{0.4}{pphi}
= \gph{-10}{10}{9}{0.4}{xipphiphi}
= \gph{-9}{9}{9}{0.4}{xialphapphiphi}
= \gph{-9}{9}{9}{0.4}{xiphipphialpha}
= \gph{-4}{4}{4}{0.4}{alphaphi}
\end{equation}
Then, substituting $\qbit{\Phi}$ for $\qbit{\Phi'}$,
\begin{equation}
\gph{-4}{4}{4}{0.4}{phi}
= \gph{-4}{4}{4}{0.4}{alphaphi}
= \gph{-4}{4}{4}{0.4}{pphi}.
\end{equation}
\end{proof}

Frobenius states also require not only symmetry but also strong symmetry to correspond to CFAs.

\begin{dfn}[Strong Symmetry]\rm
Let $\qbit{\Psi}$ be a tripartite state.
If $\qbit{\Psi}$ is a symmetric state and there is $\qeff{\Phi}$ such that 
\begin{equation}
\gphh{-10}{10}{10}{20}{strongsymmetryleft}
=
\gphh{-10}{10}{10}{20}{strongsymmetryright}
\label{strongsym}
\end{equation}
then $\qbit{\Psi}$ is strongly symmetric.
\end{dfn}

\begin{dfn}[Frobenius State]\rm
Let $\qbit{\Psi}$ be a tripartite state.
If there are $\qeff{\Phi}$ and $\qeff{\xi}$ such that they satisfy (\ref{sloccmax}) and (\ref{strongsym}), then $\qbit{\Psi}$ is a Frobenius state.
\end{dfn}

Notice that this definition requires all equations with the same $\qeff{\Phi}$ and $\qeff{\xi}$ to be satisfied.
According to Proposition \ref{uniqueness}, we write a Frobenius state $\qbit{\Psi}$ with $\qeff{\xi}$ to indicate the triple $\qbit{\Psi}$, $\qeff{\Phi}$, and $\qeff{\xi}$ such that they satisfy the Frobenius conditions.
Note that for a Frobenius state $\qbit{\Psi}$, more than one pair ($\qeff{\xi}$, $\qeff{\Phi}$)  is generally possible such that $\qbit{\Psi}$, $\qeff{\Phi}$, and $\qeff{\xi}$ satisfy the Frobenius conditions.

Frobenius states and CFAs are strictly connected. The connection is given by the following theorems.

\begin{thm}[\cite{BCAK}]\label{CFAtoState}
For any CFA,
\begin{equation}
\begin{array}{ccc}
\qbit{\Psi}:= \gphh{-1.5}{1.5}{1.5}{4}{psistate}

&
\qeff{\Phi}:= 
\gphh{-1.5}{1.5}{1.5}{3.5}{cup}
&
\qeff{\xi}:= \gepsilon
\end{array}
\end{equation}
$\qbit{\Psi}$ is a Frobenius state with $\qeff{\Phi}$ and $\qbit{\xi}$.
\end{thm}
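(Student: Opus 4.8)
The plan is to argue entirely within the diagrammatic calculus: I will show that each of the three ingredients of the definition of a Frobenius state — the symmetry equations~(\ref{symmetry}), the strong SLOCC-maximality equation~(\ref{sloccmax}), and the strong symmetry equation~(\ref{strongsym}) — holds for the stated $\qbit{\Psi}$, $\qeff{\Phi}$, $\qeff{\xi}$ simply because, read inside the given CFA, every diagram occurring in them is an $F$-graph, and two connected $F$-graphs with the same numbers of inputs, outputs and loops are equal by the spider theorem of~\cite{BCAK}. Concretely, $\qbit{\Psi}$ is the connected $F$-graph with no inputs, three outputs and no loops — for instance $(1_a\otimes\delta)\circ\delta\circ\eta$ — hence a ``tripartite'' $F$-graph; $\qeff{\Phi}=\epsilon\circ\mu$ is the connected $F$-graph with two inputs, no outputs and no loops (the ``cup''); and $\qeff{\xi}=\epsilon$ is the connected $F$-graph with one input and no outputs. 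So the whole verification reduces to bookkeeping: assemble each pictured diagram out of these $F$-graphs, check it stays connected, and read off its (inputs, outputs, loops) triple.

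First I would dispatch symmetry. Precomposing $\qbit{\Psi}$ with $\gamma_{a,a}$ on any two of its output legs yields again a connected $F$-graph with no inputs, three outputs and no loops; so does $\qbit{\Psi}$ itself; hence the two coincide by the spider theorem, and all equations of~(\ref{symmetry}) hold. (Equivalently this is immediate from cocommutativity $\gamma_{a,a}\circ\delta=\delta$ together with coassociativity.) In particular $\qbit{\Psi}$ is a symmetric state, which is the side condition one needs before speaking of strong symmetry or of a Frobenius state.

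Next, strong SLOCC-maximality. The left-hand side of~(\ref{sloccmax}) is built from $\qbit{\Psi}$ by feeding $\qeff{\xi}=\epsilon$ into one output leg and $\qeff{\Phi}=\epsilon\circ\mu$ into a second output leg together with one fresh input wire, leaving the third leg as output — an arrow $a\to a$. Inside the CFA this is the composite of the three spider nodes $\qbit{\Psi}$, $\qeff{\Phi}$, $\qeff{\xi}$ joined along two internal edges; by the spider theorem these collapse to a single node whose only external wires are the fresh input and the remaining output, i.e. to $1_a$, which is exactly the right-hand side (``linemax''). No loop appears, because the two internal edges join distinct nodes rather than closing a cycle. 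Hence~(\ref{sloccmax}) holds with these $\qeff{\Phi},\qeff{\xi}$; combined with the symmetry just proved, the original three-equation form of strong SLOCC-maximality follows as well.

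Finally, strong symmetry: the two sides of~(\ref{strongsym}) are again assembled from copies of $\qbit{\Psi}$ and $\qeff{\Phi}$ (together with symmetries and associators, which are graphically invisible), so both are $F$-graphs; expanding each side into its network of spider nodes and fusing, one finds that both reduce to one and the same connected, loop-free spider with the same numbers of inputs and outputs, whence they are equal by the spider theorem. Since~(\ref{symmetry}),~(\ref{sloccmax}) and~(\ref{strongsym}) all hold with the single pair $\qeff{\Phi}=\epsilon\circ\mu$, $\qeff{\xi}=\epsilon$, the definition of a Frobenius state is met, and $\qbit{\Psi}$ is a Frobenius state with $\qeff{\Phi}$ and $\qeff{\xi}$. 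The one genuinely delicate point is the loop count: the spider theorem applies only after one has checked that neither side of any of these equations secretly contains a closed wire, and a miscount there would be fatal — so the heart of the proof is verifying, diagram by diagram, that all the wirings that occur are acyclic and leave exactly the expected external legs.
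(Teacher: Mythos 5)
The paper gives no proof of this theorem at all --- it is imported verbatim from \cite{BCAK} --- so there is nothing internal to compare against; your argument is, however, correct and is essentially the standard one. Reducing everything to the spider theorem is exactly the right move: $\qbit{\Psi}=(1_a\otimes\delta)\circ\delta\circ\eta$, $\qeff{\Phi}=\epsilon\circ\mu$ and $\qeff{\xi}=\epsilon$ are connected loop-free $F$-graphs, and each of (\ref{symmetry}), (\ref{sloccmax}) and (\ref{strongsym}) equates two connected $F$-graphs with matching (input, output, loop) data, so all three hold; your loop-count check for (\ref{sloccmax}) (the cup's two inputs attach to one leg of $\qbit{\Psi}$ and to the fresh external wire, so no cycle closes) is the one place where care is genuinely needed, and you handle it correctly. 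The only part you assert rather than verify is the fusion bookkeeping for (\ref{strongsym}) --- there both sides are two $\qbit{\Psi}$-spiders joined through $\qeff{\Phi}$ along a single internal path, hence connected, acyclic, with zero inputs and four outputs, so the same one-line count finishes it --- but that is a routine completion, not a gap.
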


\begin{thm}[\cite{BCAK}]\label{StatetoCFA}
Any Frobenius state defines a CFA $(\Hilbert, \gmu, \gdelta, \geta, \gepsilon)$ as
\begin{equation}
\begin{array}{llll}
\gmu:= \gph{-5}{4}{5}{0.3}{mudef}
&
\geta:= \gph{-5}{4}{5}{0.3}{etadef}
&
\gdelta:= \gph{-5}{8}{5}{0.3}{deltadef}
&
\gepsilon:= \gph{-3}{3}{3}{0.4}{epsilondef}
\end{array}
\end{equation}
\end{thm}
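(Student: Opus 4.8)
The plan is to verify the axioms of a commutative Frobenius algebra (the equations displayed in the Definition of CFA) for the data $(\Hilbert,\gmu,\gdelta,\geta,\gepsilon)$ built from a Frobenius state $\qbit{\Psi}$ with $\qeff{\Phi}$ and $\qeff{\xi}$. The basic strategy is purely diagrammatic: each of $\gmu,\gdelta,\geta,\gepsilon$ is defined as a composite of copies of $\qbit{\Psi}$, $\qeff{\Phi}$, and $\qeff{\xi}$, so every axiom becomes an equation between two $\qbit{\Psi}$-built diagrams, which we massage into each other by repeated use of (i) the symmetry equations (\ref{symmetry}) for $\qbit{\Psi}$, (ii) the strong-symmetry equation (\ref{strongsym}) relating two adjacent $\qbit{\Psi}$'s via $\qeff{\Phi}$, and (iii) the strong-SLOCC-maximality equation (\ref{sloccmax}) which lets us cancel a $\qeff{\Phi}$--$\qeff{\xi}$ ``cup-cap'' against an identity wire. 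I would also use Proposition~\ref{uniqueness} wherever a normalization/unit needs to be pinned down.

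First I would establish the easy structural axioms. Commutativity of $\gmu$ and cocommutativity of $\gdelta$, i.e.\ $\mu\circ\gamma_{a,a}=\mu$ and $\gamma_{a,a}\circ\delta=\delta$, follow immediately from the symmetry equations (\ref{symmetry}) of $\qbit{\Psi}$, since permuting the two legs that become the inputs of $\gmu$ (resp.\ outputs of $\gdelta$) is exactly permuting two wires of $\qbit{\Psi}$. Coassociativity $(\delta\otimes 1)\circ\delta=\alpha\circ(1\otimes\delta)\circ\delta$ and associativity $\mu\circ(\mu\otimes 1)\circ\alpha=\mu\circ(1\otimes\mu)$ are the diagrammatic statement that gluing three $\qbit{\Psi}$'s in a row is bracketing-independent; here one uses symmetry plus strong symmetry to reassociate, together with the coherence theorem to ignore $\alpha$. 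The unit/counit laws $\lambda\circ(\epsilon\otimes 1)\circ\delta=1=\rho\circ(1\otimes\epsilon)\circ\delta$ and the $\eta$-version are where (\ref{sloccmax}) does the real work: after expanding $\gdelta$ and $\gepsilon$, the left side is precisely a $\qbit{\Psi}$ with one leg capped by $\qeff{\xi}$ composed against the $\qeff{\Phi}$-datum, which is exactly the configuration (\ref{sloccmax}) collapses to a bare wire; symmetry handles the left-vs-right variants.

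The Frobenius law — the displayed equation $(1\otimes\mu)\circ\alpha^{-1}\circ(\delta\otimes 1)=\delta\circ\mu=(\mu\otimes 1)\circ\alpha\circ(1\otimes\delta)$ — is the step I expect to be the main obstacle, because unlike the others it mixes $\gmu$ and $\gdelta$, so both $\qbit{\Psi}$'s and $\qeff{\Phi}$'s appear and the cancellation pattern is less direct. The plan there is to expand all three expressions into $\qbit{\Psi}$-diagrams and show each equals the ``single-hexagon'' $F$-graph with two inputs and two outputs: one sliding move uses strong symmetry (\ref{strongsym}) to move a $\qeff{\Phi}$ past an adjacent $\qbit{\Psi}$, and then (\ref{sloccmax}) removes the resulting $\qeff{\Phi}$--$\qeff{\xi}$ pair, reducing a two-$\qbit{\Psi}$, one-$\qeff{\Phi}$ configuration to the canonical shape; symmetry of $\qbit{\Psi}$ is used freely throughout to line up legs. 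Once all five families of axioms are checked, commutativity and the ambient symmetric monoidal structure of $\mathbf{FdHilb}$ give that $(\Hilbert,\gmu,\gdelta,\geta,\gepsilon)$ is a CFA, completing the proof. A final remark worth making is that this construction is inverse to Theorem~\ref{CFAtoState} up to the non-uniqueness of $(\qeff{\xi},\qeff{\Phi})$ noted after the Frobenius-state definition, so the two theorems together witness the claimed correspondence.
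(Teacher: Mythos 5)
The paper does not prove this theorem --- it is quoted from \cite{BCAK} --- so there is no in-paper argument to compare against; your outline reproduces the standard proof from that reference and matches each CFA axiom to the right hypothesis: (co)commutativity to the symmetry equations (\ref{symmetry}), the (co)unit laws to the snake-type cancellation (\ref{sloccmax}), and associativity together with the Frobenius law to strong symmetry (\ref{strongsym}), which is precisely what that condition exists to deliver. Two harmless slips worth fixing if you write this out in full: associativity compares two (not three) copies of $\qbit{\Psi}$ glued along a single $\qeff{\Phi}$, and the $\eta$-unit law needs (\ref{sloccmax}) applied twice in succession (once to eliminate the extra copy of $\qbit{\Psi}$ hidden inside $\geta$, once more to straighten the remaining wire), whereas the $\epsilon$-counit law needs it only once.
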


Theorem \ref{CFAtoState} describes how to create a Frobenius state from a CFA, and Theorem \ref{StatetoCFA} describes the converse.
Note that a CFA induced by a Frobenius state with $\qeff{\xi}$ and $\qeff{\Phi}$ is the same as the original CFA. For example, we can show CFAs that correspond to tripartite qubits; these CFAs are defined in $\mathbf{FdHilb}$.
In tripartite qubits, six SLOCC classes are possible: $\overline{\qbit{000}}$, $\overline{\qbit{000}+\qbit{011}}$, $\overline{\qbit{000}+\qbit{101}}$, $\overline{\qbit{000}+\qbit{110}}$, $\overline{\qbit{\mbox{GHZ}}}$, and $\overline{\qbit{\mbox{W}}}$.
Obviously the first four classes are not strongly SLOCC-maximal. In contrast, $\qbit{\mbox{GHZ}}$ and $\qbit{\mbox{W}}$, which were defined in (\ref{GHZ}) and (\ref{W}), are Frobenius states, so these states correspond to CFAs.

\begin{exm}
$\qbit{GHZ}$ with $\qeff{\xi}:= \qeff{0}+\qeff{1}$ induces a CFA 
\begin{equation}
\begin{array}{llll}
\gmu:= \qact{0}{00} + \qact{1}{11} 
& \geta:= \qbit{0}+\qbit{1}
& \gdelta:= \qact{00}{0} + \qact{11}{1}
& \gepsilon:= \qeff{0}+\qeff{1}
\end{array}
\label{GHZCFA}
\end{equation}
\end{exm}

\begin{exm}
$\qbit{W}$ with $\qeff{\xi}:= \qeff{0}$ induces a CFA
\begin{equation}
\begin{array}{llll}
\gmu:= \qact{0}{01} + \qact{0}{10} + \qact{1}{11} & \geta:= \qbit{1} &
\gdelta:= \qact{00}{0} + \qact{01}{1} +\qact{10}{1} & \gepsilon:= \qeff{0}
\end{array}
\label{WCFA}
\end{equation}
\end{exm}

\subsection{Classification of Tripartite Qubits}

In \cite{BCAK}, two kinds of CFAs were defined to classify tripartite
qubit states.

\begin{dfn}[Special Commutative Frobenius Algebra]\rm
A commutative Frobenius algebra that satisfies the following equation is called a {\em special commutative Frobenius algebra (SCFA)}:
\begin{equation}
\gphh{-3.5}{3.1}{2.4}{8}{dual}
=
\gphh{-3.5}{3.1}{2.4}{8}{line}.
\label{SCFAcond}
\end{equation}
\end{dfn}

\begin{dfn}[Anti-special Commutative Frobenius Algebra]\rm
A commutative Frobenius algebra that satisfies the following equation is called an {\em anti-special commutative Frobenius algebra (ACFA)}:
\begin{equation}
\gphh{-3.5}{3}{2.5}{8}{ACFAleft}
=
\gphh{-3.5}{3}{2.5}{8}{ACFAright}.
\label{ACFAcond}
\end{equation}
\end{dfn}

These algebras are topologically different from each other. Using simple calculations, it is obvious that  (\ref{GHZCFA}) is an SCFA and that (\ref{WCFA}) is an ACFA.
For distinction, an SCFA is expressed as a white dot $\gph{-0.6}{2}{0}{0.24}{SCFA}$ and an ACFA is expressed as a black dot $\gph{-0.6}{2}{0}{0.24}{delta}$. 
In \cite{BCAK}, it was demonstrated that these two types of CFAs strictly correspond to the two SLOCC classes of tripartite qubits.

\begin{thm}[\cite{BCAK}]
Let $\qbit{\Psi}$ be a Frobenius state.
$\qbit{\Psi}$ is SLOCC-equivalent to the GHZ state iff there is $\qeff{\xi}$ such that $\qbit{\Psi}$ with $\qeff{\xi}$ induces an SCFA.
$\qbit{\Psi}$ is SLOCC-equivalent to the $W$ state if and only if there is $\qeff{\xi}$ such that $\qbit{\Psi}$ with $\qeff{\xi}$ induces an ACFA.
\end{thm}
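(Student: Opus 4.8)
The plan is to exploit the correspondence between Frobenius states and CFAs given by Theorems~\ref{CFAtoState} and~\ref{StatetoCFA}, which is a bijection: a CFA is recovered from its Frobenius state (the observation following Theorem~\ref{StatetoCFA}), and dually a Frobenius state equals the Frobenius state of the CFA it induces with a chosen $\qeff{\xi}$ (and the $\qeff{\Phi}$ that Proposition~\ref{uniqueness} attaches to $\qeff{\xi}$). The second ingredient is the elementary fact that a $2$-dimensional commutative unital $\C$-algebra is isomorphic either to $\C\oplus\C$ or to $\C[x]/(x^{2})$. I will match the first case with the SCFA/GHZ class and the second with the ACFA/$W$ class.

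\emph{From CFA type to state type.} Suppose $\qbit{\Psi}$ with $\qeff{\xi}$ induces an SCFA $\mathcal F=(a,\mu,\eta,\delta,\epsilon)$. The special law~(\ref{SCFAcond}), $\mu\circ\delta=1_{a}$, makes $\delta(\eta(1))$ a separability idempotent of $(a,\mu,\eta)$ (combine the Frobenius relation with $\mu\circ\delta=1_{a}$), so this algebra is separable and hence $\cong\C\oplus\C$. Fixing a basis $e_{1},e_{2}$ of orthogonal idempotents, the Frobenius relation forces $\delta(e_{j})=(e_{j}\otimes e_{j})/\epsilon(e_{j})$, and the special law then forces $\epsilon(e_{i})=1$ and $\delta(e_{i})=e_{i}\otimes e_{i}$; consequently the Frobenius state of $\mathcal F$ is $e_{1}^{\otimes 3}+e_{2}^{\otimes 3}$, which is sent to $\qbit{\mathrm{GHZ}}$ by applying the invertible map $e_{i}\mapsto\qbit{i}$ to each tensor factor. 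Since this state is $\qbit{\Psi}$, we get $\qbit{\Psi}\sim\qbit{\mathrm{GHZ}}$. If instead $\qbit{\Psi}$ with $\qeff{\xi}$ induces an ACFA, the law~(\ref{ACFAcond}) forces the underlying algebra to be non-separable, hence $\cong\C[x]/(x^{2})$, as a direct calculation confirms; computing the Frobenius state for $\C[x]/(x^{2})$ with an arbitrary admissible counit $\epsilon(1)=\alpha$, $\epsilon(x)=\beta\neq0$ yields, in a suitable basis, a state of the form $\qbit{110}+\qbit{101}+\qbit{011}+c\,\qbit{111}$, and a single lower-triangular shear applied to each factor removes the $\qbit{111}$ term, exhibiting SLOCC-equivalence to $\qbit{110}+\qbit{101}+\qbit{011}$ and hence to $\qbit{W}$.

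\emph{From state type to CFA type.} Suppose $\qbit{\Psi}\sim\qbit{\mathrm{GHZ}}$. Being a Frobenius state, $\qbit{\Psi}$ is symmetric, and I would first prove a rigidity lemma: a \emph{symmetric} tripartite state SLOCC-equivalent to $\qbit{\mathrm{GHZ}}$ has the form $(L\otimes L\otimes L)\qbit{\mathrm{GHZ}}$ for a single invertible $L$. To see this, write $\qbit{\Psi}$ as a sum of two product states---admissible since $\qbit{\Psi}$ has GHZ type, and essentially unique up to reordering and rescaling the two summands---and observe that invariance under permuting the three tensor factors forces the three legs of each summand to be pairwise proportional, which yields the claimed form. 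Now transport the canonical SCFA~(\ref{GHZCFA}) along $L$ by conjugating each structure map; this is again an SCFA, and its Frobenius state is $(L\otimes L\otimes L)\qbit{\mathrm{GHZ}}=\qbit{\Psi}$, so by the observation following Theorem~\ref{StatetoCFA} it is exactly the CFA induced by $\qbit{\Psi}$ with $\qeff{\xi}=(\qeff{0}+\qeff{1})\circ L^{-1}$. Hence the required $\qeff{\xi}$ exists. The case $\qbit{\Psi}\sim\qbit{W}$ is handled the same way, using the one-parameter normal form of symmetric $W$-class states and transporting~(\ref{WCFA}) to obtain an ACFA.

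Combining the two steps gives all four implications: $\qbit{\Psi}\sim\qbit{\mathrm{GHZ}}$ produces an SCFA-inducing $\qeff{\xi}$ and, conversely, any SCFA-inducing $\qeff{\xi}$ forces $\qbit{\Psi}\sim\qbit{\mathrm{GHZ}}$; the $W$/ACFA statement is symmetric, and the two alternatives are mutually exclusive because $\qbit{\mathrm{GHZ}}\not\sim\qbit{W}$. The step I expect to be the main obstacle is the rigidity lemma: promoting an arbitrary SLOCC equivalence $\qbit{\Psi}=(L_{1}\otimes L_{2}\otimes L_{3})\qbit{\mathrm{GHZ}}$ to one with $L_{1}=L_{2}=L_{3}$ (and the corresponding statement for $\qbit{W}$), which rests on the essential uniqueness of the rank-$2$ tensor decomposition together with the normal-form analysis of the $W$-class. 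A secondary, more routine difficulty is keeping track of how $\mu,\delta,\eta,\epsilon$ depend on $\qeff{\Phi}$ and $\qeff{\xi}$ and verifying the round-trip and transport-along-$L$ identities.
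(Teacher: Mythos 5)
This theorem is quoted from \cite{BCAK}; the paper gives no proof of it, so the only meaningful comparison is with the qutrit analogues the author does prove in Section~4 (the symmetric-transformation theorem in 4.3, Theorem~\ref{slocccond}, and Theorem~\ref{SCFAuniq}). Your proposal is correct in outline and in fact mirrors that strategy closely: your ``rigidity lemma'' is exactly the qubit case of the paper's theorem that SLOCC-equivalent symmetric states are related by a single $L\otimes L\otimes L$ (proved there by the arguments of \cite{PhysRevA.81.052315}), and your transport of the canonical SCFA/ACFA along $L$ with $\qeff{\xi}\circ L^{-1}$ is precisely Theorem~\ref{slocccond}. Where you genuinely diverge is the converse direction: you classify the induced two-dimensional commutative algebra intrinsically ($\C\oplus\C$ for SCFA, $\C[x]/(x^{2})$ for ACFA) and read the Frobenius state off a basis of orthogonal idempotents, resp.\ the nilpotent basis. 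The paper's qutrit version instead invokes the copyable-vector basis theorem for the SCFA case (Theorem~\ref{SCFAuniq}) and handles the remaining algebras by exclusion, checking that the other Frobenius representatives admit no $\qeff{\xi}$ inducing an ACFA. Your algebraic route is more self-contained and yields both converses directly rather than by elimination; the exclusion route scales better to higher dimensions, where classifying the algebras is harder. Two points need tightening before the argument is complete: (i) for the ACFA case you assert but do not show that anti-speciality rules out $\C\oplus\C$ --- this does hold (on $\C\oplus\C$ the loop $\mu\circ\delta$ has rank $2$ for every admissible counit, while the anti-special law forces rank $1$), but it is the load-bearing step and deserves the ``direct calculation'' you defer; (ii) the rigidity lemma for the $W$ class cannot be obtained from uniqueness of rank-$2$ tensor decompositions, since $\qbit{\mbox{W}}$ has tensor rank $3$ --- you need the symmetric normal-form (Majorana-root) analysis you allude to, or simply the result of \cite{PhysRevA.81.052315}. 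With those supplied, the proof goes through.
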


\section{Qutrits and Commutative Frobenius Algebras}
In this section, we classify three-dimensional CFAs, show correspondence between tripartite qutrits and CFAs, and demonstrate how to compose any qutrit graphically using arguments similar to those used in \cite{BCAK}.
However, unlike in the case of qubits, infinitely many SLOCC classes are possible in tripartite qutrits. 
Hence, we must distinguish between SLOCC classes that include Frobenius states and those that do not. 
To this end, we use the requirements of Frobenius states, i.e., strong SLOCC-maximality, symmetry, and strong symmetry. First, we identify any class that does not have a SLOCC-maximal state. Then we examine which classes include a symmetric state. Next, we classify Frobenius states using the strong symmetry condition, and then define three CFAs that correspond to Frobenius states, classify these using graphical equations, and prove that the classification strictly corresponds to the three CFAs. Finally, we represent any qutrits graphically.

\subsection{Non-Maximal Class}
First, we use the first condition, strong SLOCC-maximality. It requires a tripartite qutrit to have a full rank density matrix in each single qutrit. Some SLOCC classes do not include strong SLOCC-maximal states. The absence of states in a SLOCC class determines the properties of the class. We call a SLOCC class that does not include any strong SLOCC-maximal state a \textit{non-maximal class}.
\begin{lmm}
For any tripartite SLOCC class X, if X includes a tripartite qutrit that is not strongly SLOCC-maximal, then X is a non-maximal class.
\end{lmm}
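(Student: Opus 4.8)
The plan is to show that strong SLOCC-maximality is a property of a whole SLOCC class, not of individual representatives; the lemma is then immediate. Precisely, I would argue that if $\qbit{\Psi}$ is strongly SLOCC-maximal and $\qbit{\Psi'}$ is SLOCC-equivalent to it, then $\qbit{\Psi'}$ is strongly SLOCC-maximal as well. Granting this, suppose X contains a tripartite qutrit $\qbit{\Psi_0}$ that is not strongly SLOCC-maximal. If X also contained some strongly SLOCC-maximal $\qbit{\Psi_1}$, then, X being an equivalence class, $\qbit{\Psi_0}$ and $\qbit{\Psi_1}$ would be SLOCC-equivalent, whence $\qbit{\Psi_0}$ would be strongly SLOCC-maximal --- a contradiction. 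So X contains no strongly SLOCC-maximal state, i.e.\ X is a non-maximal class.

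For the invariance step I would use the characterization of SLOCC-equivalence recalled in Section 2: $\qbit{\Psi'} = (L_1 \otimes L_2 \otimes L_3)\qbit{\Psi}$ with each $L_j$ invertible. Let $\qeff{\Phi_i}$ and $\qeff{\xi_i}$ witness strong SLOCC-maximality of $\qbit{\Psi}$, so that the three diagrams assembled from $\qbit{\Psi}$, $\qeff{\Phi_i}$, $\qeff{\xi_i}$ each equal the identity wire. Each diagram touches the three legs of $\qbit{\Psi}$ through the effects $\qeff{\xi_i}$ and $\qeff{\Phi_i}$, one of these legs being the surviving output wire. I would define new witnesses $\qeff{\xi_i'}$, $\qeff{\Phi_i'}$ by composing the old effects, leg by leg, with the appropriate inverses $L_j^{-1}$: graphically one slides each $L_j^{-1}$ down its wire, past $\qbit{\Psi'}$, where it annihilates the matching $L_j$, so the transformed diagram collapses back to the original one for $\qbit{\Psi}$ and again equals the identity. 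Equivalently, writing the bipartite state obtained from $\qbit{\Psi}$ by applying $\qeff{\xi}$ to one leg as a matrix $M(\xi)$, the condition says $M(\xi_i)$ is invertible with $\qeff{\Phi_i}$ encoding its inverse; under $(L_1\otimes L_2\otimes L_3)$ one has $M'(\xi) = L_a\,M(L_b^{\mathrm{T}}\xi)\,L_c^{\mathrm{T}}$ for the suitable labelling of the legs, so invertibility is preserved and an appropriate $\qeff{\Phi_i'}$ still exists. An even more transparent route is to invoke the equivalent description noted just after the lemma --- that strong SLOCC-maximality amounts to every single-qutrit reduced density matrix having full rank --- together with the fact that an SLOCC operation sends $\rho_j$ to $L_j\rho_j L_j^{\dagger}$, which does not change rank.

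The one point that needs care is that all three of $L_1,L_2,L_3$ get absorbed, including the one sitting on the surviving output wire of each diagram, so that the transformed diagram is exactly the identity and not merely the identity followed by a leftover invertible operator. This is why one must re-choose the cap-like effect $\qeff{\Phi_i}$ and not only $\qeff{\xi_i}$: the freedom in $\qeff{\Phi_i}$ is precisely what soaks up the output-leg factor (the role of $L_c^{\mathrm{T}}$ in the relation $M'(\xi)=L_a M(L_b^{\mathrm{T}}\xi)L_c^{\mathrm{T}}$ above). With the reduced-density-matrix formulation this subtlety disappears, since there one only tracks ranks; so if that equivalence is taken as given, the lemma reduces to the trivial observation that multiplying by invertible matrices preserves rank.
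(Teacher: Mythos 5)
Your proof is correct and takes essentially the same route as the paper's: the paper also reduces the lemma to invariance of strong SLOCC-maximality under SLOCC-equivalence and transports the witnesses, setting $\qeff{\xi_i'} := \qeff{\xi_i}\circ L_i^{-1}$ and $\qeff{\Phi_i'} := \qeff{\Phi_i}\circ(L_j^{-1}\otimes L_k^{-1})$, where the second inverse factor plays exactly the output-leg-absorbing role you correctly single out as the one delicate point. The reduced-density-matrix shortcut you offer as an alternative echoes the paper's remark that strong SLOCC-maximality forces full-rank single-qutrit density matrices, but it is not the argument the paper actually uses.
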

\begin{proof}
Let $\qbit{\phi}$ be a tripartite qutrit that is SLOCC-equivalent to a strongly SLOCC-maximal state $\qbit{\psi}$. $\qbit{\psi}$ has $\qeff{\xi_i}$ and $\qeff{\Phi_i}$ $(i \in \{1, 2, 3\})$ such that they satisfy the SLOCC-maximal conditions. Because $\qbit{\phi}$ and $\qbit{\psi}$ are SLOCC-equivalent, there are invertible matrices $L_1$, $L_2$, and $L_3$ such that $\qbit{\phi} = (L_1 \otimes L_2 \otimes L_3)\qbit{\psi}$. $\qeff{\xi_i'}:= \qeff{\xi_i} \circ L_i^{-1}$ and $\qeff{\Phi_i'}:= \qeff{\Phi_i} \circ(L_j^{-1} \otimes L_k^{-1})$ with $i, j, k\in \{1,2,3\}$, which differ from each other. $\qbit{\phi}$, $\qeff{\xi_i'}$, and $\qeff{\Phi_i'}$ satisfy the SLOCC-maximal conditions.
\end{proof}

Frobenius states require strong SLOCC-maximality, so a non-maximal class does not have Frobenius states. Using simple calculations, we can prove that for any $i \in \{0, \ldots, 24\}$, $\overline{\qbit{\psi_i}}$ is a non-maximal class.

In $\overline{\qbit{\pi(\phi, \varphi, \chi, \psi)}}$, if $\qbit{\phi}$ and $\qbit{\chi}$ can be expressed as $\qbit{\phi} = \alpha\qbit{0}+\beta\qbit{1}$ and $\qbit{\chi} = \gamma\qbit{0}+\delta\qbit{1}$ by some complex numbers $\alpha$, $\beta$,$\gamma$, and $\delta$, then this class does not include a strong SLOCC-maximal state. The same can be said for $\qbit{\varphi}$ and $\qbit{\psi}$.

\subsection{Non-Symmetric Class}
Next, we use the second condition, i.e., symmetry. Many SLOCC classes include strong SLOCC-maximal states, but a few of them include symmetric states. A SLOCC class with strong SLOCC-maximal states but no symmetric states is called a \textit{non-symmetric class}. The following lemma is used to identify non-symmetric classes. 
\begin{lmm}\label{permlmm}
For any permutation $P$, if a tripartite qutrit $\qbit{\phi}$ is SLOCC-equivalent to a tripartite symmetric state $\qbit{\psi}$, $P\qbit{\phi}$ is SLOCC-equivalent to $\qbit{\phi}$. \qed
\end{lmm}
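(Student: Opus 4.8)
The plan is to reduce the statement to the algebraic characterization of SLOCC-equivalence already quoted in Section~2, namely that two tripartite states are SLOCC-equivalent exactly when one is obtained from the other by applying a tensor product of three invertible matrices. The only new ingredient required is the observation that a symmetric state is fixed by \emph{every} permutation of its three tensor factors, not just by the particular transpositions displayed in~(\ref{symmetry}); since those transpositions generate the whole symmetric group $S_3$, this follows at once. So I would first record: if $\qbit{\psi}$ is a tripartite symmetric state and $P$ is the linear operator permuting the three factors according to some $\sigma \in S_3$, then $P\qbit{\psi} = \qbit{\psi}$.

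Next, using the cited characterization, fix invertible matrices $L_1, L_2, L_3$ with $\qbit{\phi} = (L_1 \otimes L_2 \otimes L_3)\qbit{\psi}$. The key computation is to commute $P$ past the local operators: conjugating a tensor product of maps by a factor-permutation merely permutes the tensor legs, so for the induced permutation $\sigma$ of indices one has $P \circ (L_1 \otimes L_2 \otimes L_3) = (L_{\sigma^{-1}(1)} \otimes L_{\sigma^{-1}(2)} \otimes L_{\sigma^{-1}(3)}) \circ P$, and hence
\begin{equation}
P\qbit{\phi} = (L_{\sigma^{-1}(1)} \otimes L_{\sigma^{-1}(2)} \otimes L_{\sigma^{-1}(3)})\,P\qbit{\psi} = (L_{\sigma^{-1}(1)} \otimes L_{\sigma^{-1}(2)} \otimes L_{\sigma^{-1}(3)})\qbit{\psi}.
\end{equation}

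Finally I would substitute $\qbit{\psi} = (L_1^{-1} \otimes L_2^{-1} \otimes L_3^{-1})\qbit{\phi}$, legitimate since each $L_i$ is invertible, to obtain
\begin{equation}
P\qbit{\phi} = \bigl(L_{\sigma^{-1}(1)}L_1^{-1} \otimes L_{\sigma^{-1}(2)}L_2^{-1} \otimes L_{\sigma^{-1}(3)}L_3^{-1}\bigr)\qbit{\phi},
\end{equation}
where each matrix $L_{\sigma^{-1}(i)}L_i^{-1}$ is again invertible; by the characterization of SLOCC-equivalence this exhibits $P\qbit{\phi}$ and $\qbit{\phi}$ as SLOCC-equivalent. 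I expect no real obstacle: the argument is essentially bookkeeping, and the only point deserving a word of justification is the passage from the transposition symmetry of~(\ref{symmetry}) to full $S_3$-invariance of $\qbit{\psi}$, together with keeping the induced index permutation $\sigma^{-1}$ straight when $P$ is pushed through the local operators.
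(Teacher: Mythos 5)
Your proposal is correct, and it supplies exactly the argument the paper leaves implicit: the lemma is stated with its proof omitted, but the proof the paper does give for the adjacent Lemma~\ref{ilolmm} is the same computation — write $\qbit{\phi}$ as a local invertible transformation of the symmetric $\qbit{\psi}$, commute the factor permutation past the tensor legs (which just permutes the $L_i$), use $P\qbit{\psi}=\qbit{\psi}$, and reassemble an invertible local operator relating $P\qbit{\phi}$ to $\qbit{\phi}$. Your added remark that the displayed transposition symmetries generate all of $S_3$ is the only point needing mention, and you handle it correctly.
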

Applying this lemma to representations, we can prove that for any $i \in \{0, \ldots , 9\}$, $\overline{\qbit{\phi_i}}$ is a non-symmetric class. 
In addition to Lemma \ref{permlmm}, any permutation of two qutrits can be represented in a $3 \times 3$ matrix.
\begin{lmm}\label{ilolmm}
For any $N$-partite qutrit $\qbit{\phi}$ and any permutation $P$ between an $i$th qutrit and a $j$th qutrit, if $\qbit{\phi}$ is SLOCC-equivalent to an $N$-partite symmetric state $\qbit{\psi}$, then there is a $3 \times 3$ invertible matrix $L$ and
\begin{equation}
M_k = \left\{\begin{array}{cc}
L & (k = i)\\
L^{-1} & (k = j)\\
I & (otherwise)
\end{array}\right.
\end{equation}
such that $P\qbit{\phi} = (\otimes_{k=1}^N M_k) \qbit{\phi}$. Here $I$ is the identity matrix.
\end{lmm}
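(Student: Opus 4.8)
The plan is to unwind the hypothesis $\qbit{\phi}\sim\qbit{\psi}$ into an explicit product of local invertible matrices, push the transposition $P$ through that product, and use the symmetry of $\qbit{\psi}$ to make $P$ disappear; the correction operators that survive will automatically have the stated shape, and in particular will be forced to be mutually inverse.

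Concretely, first I would invoke the matrix characterization of SLOCC-equivalence recalled earlier: since $\qbit{\phi}$ and $\qbit{\psi}$ are SLOCC-equivalent, there are invertible $3\times 3$ matrices $N_1,\ldots,N_N$ with $\qbit{\phi} = \big(\bigotimes_{k=1}^N N_k\big)\qbit{\psi}$, and hence also $\qbit{\psi} = \big(\bigotimes_{k=1}^N N_k^{-1}\big)\qbit{\phi}$. Writing $P$ for the operator that swaps the $i$th and $j$th tensor factors, the elementary commutation rule for a permutation and a tensor product of operators gives
\[
P \circ \Big(\bigotimes_{k=1}^N N_k\Big) = \Big(\bigotimes_{k=1}^N N_k'\Big) \circ P,
\qquad
N_i' = N_j,\quad N_j' = N_i,\quad N_k' = N_k\ \ (k\neq i,j).
\]
Then, since $\qbit{\psi}$ is symmetric we have $P\qbit{\psi} = \qbit{\psi}$, and combining this with the two displays above yields
\[
P\qbit{\phi}
= P\Big(\bigotimes_{k} N_k\Big)\qbit{\psi}
= \Big(\bigotimes_k N_k'\Big)P\qbit{\psi}
= \Big(\bigotimes_k N_k'\Big)\qbit{\psi}
= \Big(\bigotimes_k N_k' N_k^{-1}\Big)\qbit{\phi}.
\]
Setting $M_k := N_k' N_k^{-1}$, we get $M_k = N_k N_k^{-1} = I$ for every $k\neq i,j$, whereas $M_i = N_j N_i^{-1}$ and $M_j = N_i N_j^{-1} = (N_j N_i^{-1})^{-1}$. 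Hence $L := N_j N_i^{-1}$ is an invertible $3\times 3$ matrix with $M_i = L$, $M_j = L^{-1}$, and $M_k = I$ otherwise, so $P\qbit{\phi} = \big(\bigotimes_{k=1}^N M_k\big)\qbit{\phi}$, which is exactly the asserted form.

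The argument is almost entirely bookkeeping, so the main obstacle is simply getting the commutation identity right — keeping straight which local matrix lands in which slot after the factors $i$ and $j$ are interchanged, and observing that this is what forces the two nontrivial correction matrices to be mutually inverse rather than arbitrary. Note that this lemma refines Lemma~\ref{permlmm}, which only records that $P\qbit{\phi}$ is SLOCC-equivalent to $\qbit{\phi}$: here we additionally pin down that the conjugating operator acts nontrivially only on the two swapped qutrits.
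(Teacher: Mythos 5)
Your proof is correct and follows essentially the same route as the paper: both express the SLOCC-equivalence via local invertible matrices, use $P\qbit{\psi}=\qbit{\psi}$ to eliminate the permutation, and observe that the surviving correction factors are $I$ except in slots $i$ and $j$, where they are mutually inverse (your $L=N_jN_i^{-1}$ coincides with the paper's $L=L_j^{-1}\circ L_i$ after inverting the direction of the equivalence). Your write-up is in fact slightly more explicit about the commutation identity for $P$, which the paper leaves implicit.
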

\begin{proof}
Let $\qbit{\phi}$ be a tripartite qutrit that is SLOCC-equivalent to a tripartite symmetric state $\qbit{\psi}$.
There are invertible matrices $L_k$ such that $\qbit{\psi} = \otimes_{k = 1}^N L_k \qbit{\phi}$. Let $F_k$ be a function such that
\begin{equation}
F_k:= \left\{\begin{array}{cc}
L_j^{-1} & (k = i)\\
L_i^{-1} & (k = j)\\
L_k^{-1} & (otherwise)
\end{array}\right.
\end{equation}
This satisfies $P\qbit{\phi} = \otimes_{k=1}^N F_k \qbit{\psi}$. Let $L = L_j^{-1} \circ L_i$, then the $M_k$ defined in this lemma satisfies $P\qbit{\phi} = \otimes_{k=1}^N (F_k \circ L_k) \qbit{\phi} = \otimes_{k=1}^N M_k \qbit{\phi}$.
\end{proof}

Using these lemmas, we can consider all classes expressed as $\overline{\qbit{\pi(\phi, \varphi, \chi, \psi)}}$. We pick up a SLOCC-maximal class $\overline{\qbit{\pi(\phi, \varphi, \chi, \psi)}}$. 
By the above arguments and calculation, it is divided into two cases, i.e., $\overline{\qbit{\pi(\phi', \varphi', \qbit{2}, \qbit{2})}}$ and $\overline{\qbit{\pi(\phi', \qbit{2}, \qbit{2}, \alpha\qbit{0}+\beta\qbit{1})}}$.

First, we assume that $\qbit{\pi(\phi', \qbit{2}, \qbit{2}, \psi')}$ belongs to the class such that $\qbit{\psi'} = \alpha\qbit{0}+\beta\qbit{1}$. Consider a permutation between the second and third qutrits. According to Lemma \ref{ilolmm}, there is an invertible matrix $L$ such that $\qbit{\pi(\phi', \qbit{2}, \qbit{2}, \psi')} = (I \otimes L \otimes L^{-1})\qbit{\pi(\phi', \qbit{2}, \qbit{2}, \psi')}$. 
Performing this calculation, we get $\alpha = \beta = 0$.
As a result, $\overline{\qbit{\pi(\phi', \qbit{2}, \qbit{2}, \psi')}}$ is a non-symmetric class.

Second, we consider $\qbit{\pi(\phi', \varphi', \qbit{2}, \qbit{2})}$. $\qbit{\phi'}$ and $\qbit{\varphi'}$ are 
\begin{eqnarray}
\qbit{\phi'} &=& \alpha\qbit{0}+\beta\qbit{1}+\gamma\qbit{2}\\
\qbit{\varphi'} &=& \delta\qbit{0}+\eta\qbit{1}+\theta\qbit{2}
\end{eqnarray}
Using a similar calculation, we arrive at $\gamma = \theta = 0$. 

Using the same process, we can check whether the classes 
$\overline{\qbit{\varphi_1}}$, $\overline{\qbit{\varphi_2}}$, and $\overline{\qbit{\varphi_3}}$
are non-symmetric.

We have identified all non-symmetric classes. The rest of the classes are 
$\overline{\qbit{\G}}$,
$\overline{\qbit{w_0}}$, 
$\overline{\qbit{s_0}}$, 
$\overline{\qbit{s_1}}$,
$\overline{\qbit{\pi(\phi', \varphi', \qbit{2}, \qbit{2})}}$. 
There are symmetric states in the following classes: 
$\qbit{\G}$, 
$\qbit{\W}:= \qbit{002}+\qbit{011}+\qbit{020}+\qbit{101}+\qbit{110}+\qbit{200}$,
$\qbit{s_2}:= \qbit{000}+\qbit{012}+\qbit{021}+\qbit{102}+\qbit{120}+\qbit{201}+\qbit{210}$,
$\qbit{s_3}:= \qbit{012}+\qbit{021}+\qbit{102}+\qbit{120}+\qbit{201}+\qbit{210}$,
$\qbit{\I}:= \qbit{001}+\qbit{010}+\qbit{100}+\qbit{222}$.

For $\qbit{\pi(\phi', \varphi', \qbit{2}, \qbit{2})}$, there are two cases that are SLOCC-equivalent to $\qbit{000}+\qbit{011}+\qbit{100}+\qbit{222}$ and $\qbit{000}+\qbit{011}+\qbit{101}+\qbit{222}$. The first case is the same class as $\qbit{\G}$. In the second class, there is a symmetric state $\qbit{\I}$.

\subsection{Frobenius Class}

A SLOCC class that includes a Frobenius state is called a \textit{Frobenius class}.
We already know that only five classes include symmetric states. To restrict the classes to Frobenius classes, we can use the following theorem.
\begin{thm}
For any tripartite symmetric states $\qbit{\phi}$ and $\qbit{\psi}$, if they are SLOCC-equivalent, then there is a $3 \times 3$ matrix $L$ such that $\qbit{\phi} = (L \otimes L \otimes L)\qbit{\psi}$.
\end{thm}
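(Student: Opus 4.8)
The SLOCC criterion quoted above gives invertible $3\times 3$ matrices $A,B,C$ with $\qbit{\phi}=(A\otimes B\otimes C)\qbit{\psi}$. One may assume that each single-qutrit reduced state of $\qbit{\psi}$ has full rank: otherwise, by symmetry all three of these reduced states coincide and are supported on a common proper subspace of $\C^3$, so $\qbit{\psi}$ (and, since SLOCC preserves the rank of the reduced states, also $\qbit{\phi}$) is really a symmetric state of $(\C^{r})^{\otimes 3}$ with $r\le 2$; the claim then reduces to the same statement in lower dimension (the case $r\le 1$ being trivial and the case $r=2$ reducing to a direct check over the finitely many symmetric tripartite-qubit classes of \cite{BCAK}), after which the resulting smaller matrix is padded with zeros to a $3\times 3$ matrix $L$. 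So from now on I would treat the full-rank case.

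Next I would exploit that \emph{both} states are symmetric. For each transposition $\sigma$ of the three legs we have $\sigma\qbit{\phi}=\qbit{\phi}$ and $\sigma\qbit{\psi}=\qbit{\psi}$; rewriting $\sigma(A\otimes B\otimes C)$ as (the corresponding permutation of $A,B,C)\,\sigma$ and cancelling $\sigma$ against $\qbit{\psi}$ shows that every one of the six orderings of $(A,B,C)$ sends $\qbit{\psi}$ to $\qbit{\phi}$. Writing $X:=A^{-1}B$ and $Y:=A^{-1}C$, this is equivalent to the statement that $X\otimes X^{-1}\otimes 1$, $Y\otimes 1\otimes Y^{-1}$ and $1\otimes X^{-1}Y\otimes Y^{-1}X$ all lie in the stabiliser $\mathrm{Stab}(\qbit{\psi})\subseteq GL_3^{\times 3}$, and --- since $\qbit{\psi}$ is symmetric --- so do all of their rearrangements across the three tensor factors.

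Now I would reduce the goal to a single matrix equation. An invertible $L$ with $(L\otimes L\otimes L)\qbit{\psi}=\qbit{\phi}$ exists iff $(L^{-1}A\otimes L^{-1}B\otimes L^{-1}C)\in\mathrm{Stab}(\qbit{\psi})$; setting $P:=L^{-1}A$ this reads $(P\otimes PX\otimes PY)\in\mathrm{Stab}(\qbit{\psi})$, and multiplying on the right by the stabiliser element $1\otimes X^{-1}\otimes X$ turns it into the equivalent requirement $(P\otimes P\otimes PM)\in\mathrm{Stab}(\qbit{\psi})$ with $M:=YX$. Thus everything comes down to producing an invertible $P$ with $(P\otimes P\otimes PM)\in\mathrm{Stab}(\qbit{\psi})$; then $L:=AP^{-1}$ is the required matrix.

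I expect this last step to be the real obstacle. It is here that one must use genuine information about $X$ and $Y$, not just the bare relations above: the fact that \emph{the same} symmetric $\qbit{\psi}$ is stabilised by all the elements listed in the second paragraph over-determines the joint structure of $X$ and $Y$ (for instance it forces commutator-type elements such as $1\otimes 1\otimes XY^{-1}X^{-1}Y$ into $\mathrm{Stab}(\qbit{\psi})$), and I would try to show from this that $M$ can be conjugated by a suitable $P$ into the set of matrices $N$ with $1\otimes N^{-1}\otimes N\in\mathrm{Stab}(\qbit{\psi})$, taking $P$ to be a symmetric expression --- a square root or geometric mean of a product of $X$- and $Y$-type factors. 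Since, by the two preceding subsections, only five SLOCC classes contain a symmetric state and each comes with an explicit small representative, the cleanest way to finish in practice is likely to verify the existence of $P$ case by case on those five representatives.
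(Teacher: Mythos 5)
Your setup is sound and in fact coincides with the paper's starting point: from $\qbit{\phi}=(A\otimes B\otimes C)\qbit{\psi}$ and the symmetry of both states you correctly extract stabiliser elements of the form $X\otimes X^{-1}\otimes 1$ (the paper's ``there is $B$ such that $\qbit{\psi}=(B\otimes B^{-1}\otimes I)\qbit{\psi}$''), and your reduction of the goal to finding an invertible $P$ with $P\otimes P\otimes PM\in\mathrm{Stab}(\qbit{\psi})$ is a clean reformulation. The preliminary reduction to the full-local-rank case is also legitimate (the three reduced states of a symmetric pure state coincide, and invertible SLOCC preserves their rank), modulo the small point that a zero-padded $L$ should be completed to an invertible one if invertibility is wanted.

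The genuine gap is exactly where you say it is, and it is not a small one: everything rests on producing the matrix $P$, and your sketch of how to do so does not yet have the right shape. Conjugating $M=YX$ into an $N$ with $1\otimes N^{-1}\otimes N\in\mathrm{Stab}(\qbit{\psi})$ would give you $P\otimes P\otimes PM=(1\otimes 1\otimes N)(P\otimes P\otimes P)$ only up to the two further unproven claims that $P\otimes P\otimes P$ and $1\otimes 1\otimes N$ (not merely $1\otimes N^{-1}\otimes N$) stabilise $\qbit{\psi}$; and the fallback of ``checking the five representatives'' does not reduce to a finite verification, because the transformation $(A,B,C)$ from a fixed representative to an arbitrary symmetric state in its class still ranges over an infinite family, and because the theorem also covers symmetric states in the non-maximal classes, which are not among your five. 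What the paper does at this point---following \cite{PhysRevA.81.052315}---is to exploit the stabiliser element $B\otimes B^{-1}\otimes I$ directly: it runs a case analysis on the Jordan normal form of $B$ (diagonalisable with various eigenvalue patterns, one Jordan block, two Jordan blocks) and shows in each case that $\qbit{\psi}$ is carried by a \emph{symmetric} transformation $L\otimes L\otimes L$ onto one of six explicit canonical symmetric states ($\qbit{000}$, $\qbit{000}+\qbit{111}$, a W-type state, $\qbit{\G}$, $\qbit{\W}$, $\qbit{\I}$), which are pairwise SLOCC-inequivalent; two SLOCC-equivalent symmetric states therefore reach the same canonical form and are symmetrically related through it. This Jordan-form classification of the possible stabilisers is the concrete mechanism your proposal is missing, and without it (or a worked-out substitute) the argument does not close.
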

\begin{proof}
This can be proved using arguments similar to those used in \cite{PhysRevA.81.052315}.
Suppose $\qbit{\phi}$ and $\qbit{\psi}$ are connected by a non-symmetric transformation such as $\qbit{\phi} = (L_1 \otimes L_2 \otimes L_3)\qbit{\psi}$. There is $B$ such that $\qbit{\psi} = (B \otimes B^{-1} \otimes I)\qbit{\psi}$.
$B$ may be diagonalizable, converted into a Jordan block, or converted into two Jordan blocks. Moreover, the first two cases are divided by their eigenvalues. 
In any case, we can prove that there is $L$ such that $\qbit{\psi} = (L \otimes L \otimes L)\qbit{\psi_0}$, where $\qbit{\psi_0}$ is one of $\qbit{000}$, $\qbit{000}+\qbit{111}$, $\qbit{001}+\qbit{010}+\qbit{111}$, $\qbit{\G}$, $\qbit{\W}$, and $\qbit{\I}$. We know that these classes are not SLOCC-equivalent to each other. Then, $\qbit{\psi}$ can be converted into $\qbit{\phi}$ by symmetric transformation via $\qbit{\psi_0}$.
\end{proof}
As a reminder, any symmetric state that is SLOCC-equivalent to a Frobenius state is also a Frobenius state. Hence, we only need to check for a tripartite symmetric state in all symmetric classes to judge whether or not the classes are Frobenius classes. Using simple calculations, we can obtain three Frobenius states:

$\qbit{\G}$ with $\qeff{\xi}:= \qeff{0} + \qeff{1} + \qeff{2}$:
\begin{eqnarray}
\begin{array}{ll}
\gmu:= \qact{0}{00} + \qact{1}{11}+ \qact{2}{22}
&
\geta:= \qbit{0} + \qbit{1} + \qbit{2}
\\
\gdelta:= \qact{00}{0} + \qact{11}{1}+ \qact{22}{2}
&
\gepsilon:= \qeff{0} + \qeff{1} + \qeff{2}
\end{array}
\end{eqnarray}

$\qbit{\W}$ with $\qeff{\xi}:= \qeff{0}$:
\begin{eqnarray}
\begin{array}{ll}
\gmu:= \qact{0}{02}+\qact{0}{11}+\qact{0}{20}+\qact{1}{12}+\qact{1}{21}+ \qact{2}{22}
& \geta:= \qbit{2}
\\
\gdelta:= \qact{00}{0} + \qact{01}{1}+\qact{10}{1}+\qact{02}{2}+\qact{11}{2}+\qact{20}{2}
& \gepsilon:= \qeff{0}
\end{array}
\end{eqnarray}

$\qbit{\I}$ with $\qeff{\xi}:= \qeff{0} + \qeff{2}$:
\begin{eqnarray}
\begin{array}{ll}
\gmu:= \qact{0}{01} + \qact{0}{10} + \qact{1}{11}+ \qact{2}{22}
&
\geta:= \qbit{1} + \qbit{2}
\\
\gdelta:= \qact{00}{0} + \qact{01}{1} + \qact{10}{1} + \qact{22}{2}
&
\gepsilon:= \qeff{0} + \qeff{2}
\end{array}
\end{eqnarray}
We call these algebras $\G$, $\W$, and $\I$.

However, the other two classes are not Frobenius classes. 
First, we consider $\qbit{s_2}$. Let $\qeff{\xi} = \alpha\qeff{0} + \beta\qeff{1} + \gamma\qeff{2}$, then 
$\qeff{\Phi} = 
\frac{1}{\alpha(\alpha^2-2\beta\gamma)}(\alpha^2\qeff{00}
-\alpha\beta\qeff{10}
-\alpha\gamma\qeff{20}
-\alpha\beta\qeff{01}
+\beta^2\qeff{11}
+(\alpha^2-\beta\gamma)\qeff{21}
-\alpha\gamma\qeff{02}
+(\alpha^2-\beta\gamma)\qeff{12}
+\gamma^2\qeff{22})
$. Calculating (\ref{strongsym}), we know that $\qbit{s_2}$ does not have strong symmetry.

Second, we consider $\qbit{s_3}$. Similarly, we let $\qeff{\xi} = \alpha\qeff{0}+\beta\qeff{1}+\gamma\qeff{2}$. A $\qeff{\Phi}$ that satisfies (\ref{sloccmax}) is 
$
-\frac{\alpha}{2\beta\gamma}\qeff{00}
+\frac{1}{2\gamma}\qeff{01}
+\frac{1}{2\beta}\qeff{02}
+\frac{1}{2\gamma}\qeff{10}
-\frac{\beta}{2\alpha\gamma}\qeff{11}
+\frac{1}{2\alpha}\qeff{12}
+\frac{1}{2\beta}\qeff{20}
+\frac{1}{2\alpha}\qeff{21}
-\frac{\gamma}{2\alpha\beta}\qeff{22}
$. The $\qeff{\xi}$, $\qeff{\Phi}$, and $\qbit{s_3}$ do not have strong symmetry.

As a result, we obtained three Frobenius classes and proved that the other classes are not Frobenius classes.

\subsection{Classification of Commutative Frobenius Algebras}

We can judge which classes are SCFA or ACFA by calculating $\gph{-1.4}{3}{2}{0.2}{dual}$. We can use the language of smooth manifolds to verify that $\G$ is an SCFA and $\W$ is an ACFA, but $\I$ is neither an SCFA nor an ACFA. Next, we define intermediate special commutative Frobenius algebras.
\begin{dfn}[ISCFA]\rm
A commutative Frobenius algebra that satisfies the following two equations is an {\em intermediate special commutative Frobenius algebra (ISCFA)}:
\begin{eqnarray}
\gphh{-7}{7}{9}{15}{dualdouble} &=& \gphh{-7}{7}{9}{15}{ISCFAright1} \label{ISCFAcond1}
\\
\gphh{-3.4}{4}{3}{9}{ISCFAleft2} &=& \gphh{-3.4}{4}{3}{9}{line} \label{ISCFAcond2}
\end{eqnarray}.
\end{dfn}
An ISCFA is expressed by a white dot with a central small black dot $\gph{-0.2}{2.5}{0}{0.23}{ISCFA}$.
We can immediately verify that $\I$ is an ISCFA, and that $\G$ and $\W$ are not. Moreover, we can prove that these three algebras correspond exactly to the three Frobenius classes.

\begin{thm}\label{slocccond}
If a Frobenius state $\qbit{\psi}$ is SLOCC-equivalent to a Frobenius state $\qbit{\phi}$ that induces an SCFA, an ACFA, and an ISCFA with $\qeff{\xi}$, then there is $\qeff{\xi'}$ such that $\qbit{\psi}$ with $\qeff{\xi'}$ induces an SCFA, an ACFA, and an ISCFA, respectively.
\end{thm}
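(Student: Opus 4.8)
The plan is to reduce the statement to a transport-of-structure argument: the SCFA, ACFA and ISCFA conditions are all equations between $F$-graphs, and such equations are invariant under transport of a CFA along an isomorphism of its underlying object, so it suffices to realize the CFA induced by $\qbit{\psi}$ as the transport along a single invertible $3\times 3$ matrix of the CFA induced by $\qbit{\phi}$.

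First I would pass to a symmetric transformation. A Frobenius state is in particular symmetric, so, applying the theorem stating that SLOCC-equivalent tripartite symmetric states are connected by a symmetric transformation, there is an invertible $3\times 3$ matrix $L$ with $\qbit{\psi}=(L\otimes L\otimes L)\qbit{\phi}$. I then set
\begin{equation}
\qeff{\xi'}:=\qeff{\xi}\circ L^{-1},\qquad \qeff{\Phi'}:=\qeff{\Phi}\circ(L^{-1}\otimes L^{-1}),
\end{equation}
where $\qeff{\Phi}$ is the dual accompanying $(\qbit{\phi},\qeff{\xi})$. Substituting these together with $\qbit{\psi}=(L\otimes L\otimes L)\qbit{\phi}$ into the graphical identities (\ref{sloccmax}) and (\ref{strongsym}), every internal occurrence of $L$ meets the matching $L^{-1}$ and the identities collapse onto those already known for $(\qbit{\phi},\qeff{\Phi},\qeff{\xi})$; hence $(\qbit{\psi},\qeff{\Phi'},\qeff{\xi'})$ is again a Frobenius triple, and by Proposition~\ref{uniqueness} the dual it induces is exactly $\qeff{\Phi'}$.

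Next I would compute the CFA induced by $(\qbit{\psi},\qeff{\Phi'},\qeff{\xi'})$ through Theorem~\ref{StatetoCFA}. Inserting the same substitutions into the defining graphs for the multiplication, unit, comultiplication and counit, the internal cancellations leave only boundary copies of $L$ and $L^{-1}$, so that
\begin{equation}
\mu'=L\circ\mu\circ(L^{-1}\otimes L^{-1}),\quad \eta'=L\circ\eta,\quad \delta'=(L\otimes L)\circ\delta\circ L^{-1},\quad \epsilon'=\epsilon\circ L^{-1},
\end{equation}
i.e. $(\Hilbert,\mu',\eta',\delta',\epsilon')$ is precisely the transport of $(\Hilbert,\mu,\eta,\delta,\epsilon)$ along the isomorphism $L$. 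By bifunctoriality of $\otimes$ and naturality of the structural maps, the transport of any $F$-graph $f\colon a^{\otimes k}\to a^{\otimes l}$ built from these data is $L^{\otimes l}\circ f\circ (L^{-1})^{\otimes k}$ (internal $L$'s telescope; loops contract to boundary-free pieces where $L^{\otimes 0}$ is the identity). Consequently each of the equations (\ref{SCFAcond}), (\ref{ACFAcond}), (\ref{ISCFAcond1}) and (\ref{ISCFAcond2}), being an identity $f=g$ between $F$-graphs with equal numbers of inputs and outputs, is turned into $L^{\otimes l}\circ f\circ (L^{-1})^{\otimes k}=L^{\otimes l}\circ g\circ (L^{-1})^{\otimes k}$, which holds iff $f=g$. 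Therefore the CFA induced by $\qbit{\psi}$ with $\qeff{\xi'}$ is an SCFA (respectively ACFA, ISCFA) exactly when the one induced by $\qbit{\phi}$ with $\qeff{\xi}$ is, proving all three assertions simultaneously.

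I expect the main obstacle to be the bookkeeping in the third step: carefully unwinding the graphical definitions in Theorem~\ref{StatetoCFA} and checking that, with the leg orderings forced by the symmetry of $\qbit{\psi}$, the substitution of $(L\otimes L\otimes L)\qbit{\phi}$ and $\qeff{\Phi}\circ(L^{-1}\otimes L^{-1})$ really does thread $\qeff{\Phi'}$ through $\mu'$ and $\delta'$ leaving no stray factor of $L$. Everything else is either a direct appeal to the earlier results (the symmetric-transformation theorem and Proposition~\ref{uniqueness}) or the routine principle that equational properties of an algebra are preserved under isomorphism of the carrier.
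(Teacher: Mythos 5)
Your proposal is correct and follows essentially the same route as the paper: invoke the symmetric-transformation theorem to obtain a single invertible $L$ with $\qbit{\psi}=(L\otimes L\otimes L)\qbit{\phi}$, set $\qeff{\xi'}:=\qeff{\xi}\circ L^{-1}$, observe that the induced CFA is the transport of the original along $L$, and conclude that the defining equations of SCFA, ACFA and ISCFA are preserved. The paper states this more tersely; your additional bookkeeping (the cancellation of internal $L$'s and the general transport of $F$-graph equations) just makes explicit what the paper leaves implicit.
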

\begin{proof}
Let $\mathcal{X} = (\C^3, \gph{-1.5}{2}{1}{0.25}{muchi}, \gph{-0.7}{1}{1}{0.25}{deltachi}, \gph{-1.0}{2}{1}{0.25}{etachi}, \gph{-1.2}{2}{1}{0.25}{epsilonchi})$ be an SCFA induced by $\qbit{\phi}$ with $\qeff{\xi}$.
There is an invertible matrix $L$ such that $\qbit{\psi} = (L \otimes L \otimes L)\qbit{\phi}$. $\qbit{\psi}$ with $\qeff{\xi'}:= \qeff{\xi} \circ L^{-1}$ induces a CFA $\mathcal{A} = (\C^3, \gph{-1.5}{2}{1}{0.25}{muA}, \gph{-0.7}{1}{1}{0.25}{deltaA}, \gph{-1.5}{2}{1}{0.25}{etaA}, \gph{-1.5}{2}{1}{0.25}{epsilonA})$:
\begin{equation}
\begin{array}{lclclcl}
\gph{-1.5}{2}{1}{0.3}{muA}:= \gph{-4.5}{6}{5}{0.3}{muchiL}
&&
\gph{-1.5}{2}{1}{0.3}{etaA}:= \gph{-2}{5}{2}{0.3}{etachiL}
&&
\gph{-0.5}{1}{1}{0.3}{deltaA}:= \gph{-4.5}{6}{5}{0.3}{deltachiL}
&&
\gph{-1.0}{2}{1}{0.3}{epsilonA}:= \gph{-3}{5}{2}{0.3}{epsilonchiL}
\end{array}
\end{equation}
$\mathcal{A}$ satisfies the SCFA condition.
The remaining cases are proved in the same way.
\end{proof}

\begin{cor}
If a Frobenius state $\qbit{\psi}$ is SLOCC-equivalent to $\qbit{\G}$, $\qbit{\W}$, and $\qbit{\I}$, then there is $\qeff{\xi'}$ such that $\qbit{\psi}$ with $\qeff{\xi'}$ induces an SCFA, an ACFA, and an ISCFA, respectively.
\end{cor}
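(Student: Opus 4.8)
The plan is to read the corollary as the specialization of Theorem~\ref{slocccond} to the three distinguished representatives $\qbit{\G}$, $\qbit{\W}$, and $\qbit{\I}$. First I would recall the concrete data assembled just above: $\qbit{\G}$ with $\qeff{\xi} := \qeff{0}+\qeff{1}+\qeff{2}$, $\qbit{\W}$ with $\qeff{\xi} := \qeff{0}$, and $\qbit{\I}$ with $\qeff{\xi} := \qeff{0}+\qeff{2}$ are Frobenius states, and the CFAs they induce are respectively $\G$, $\W$, and $\I$, with $\G$ an SCFA, $\W$ an ACFA, and $\I$ an ISCFA (this last classification being exactly what was checked in the previous subsection). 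These three witnesses are precisely what the hypothesis of Theorem~\ref{slocccond} requires.

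Then I would dispatch the three implications one at a time. Suppose $\qbit{\psi}$ is a Frobenius state that is SLOCC-equivalent to $\qbit{\G}$; since $\qbit{\G}$ with $\qeff{0}+\qeff{1}+\qeff{2}$ induces the SCFA $\G$, applying Theorem~\ref{slocccond} with $\qbit{\phi} := \qbit{\G}$ produces a $\qeff{\xi'}$ for which $\qbit{\psi}$ with $\qeff{\xi'}$ induces an SCFA. The same argument with $\qbit{\phi} := \qbit{\W}$ and its distinguished effect $\qeff{\xi} := \qeff{0}$ gives the ACFA case, and with $\qbit{\phi} := \qbit{\I}$ and $\qeff{\xi} := \qeff{0}+\qeff{2}$ gives the ISCFA case. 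No further computation is required.

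I do not expect a genuine obstacle here: the corollary is immediate once Theorem~\ref{slocccond} is in hand and the three named states have been exhibited as Frobenius states of the correct algebraic type. The only point that deserves a word of care is that, to invoke Theorem~\ref{slocccond} verbatim, each representative must be a Frobenius state in the full sense, i.e.\ there must be a common $\qeff{\Phi}$ making both (\ref{sloccmax}) and (\ref{strongsym}) hold for the chosen $\qeff{\xi}$; but this was already verified when $\G$, $\W$, and $\I$ were constructed, so the corollary follows at once.
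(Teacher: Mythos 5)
Your proposal is correct and is exactly the argument the paper intends: the corollary is stated without proof as an immediate specialization of Theorem~\ref{slocccond} to the three representatives $\qbit{\G}$, $\qbit{\W}$, and $\qbit{\I}$, whose induced algebras were already verified to be an SCFA, an ACFA, and an ISCFA respectively. Your extra remark about needing the full Frobenius data (a common $\qeff{\Phi}$ satisfying both conditions) is a fair point of care, and it is indeed covered by the explicit constructions in the preceding subsection.
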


\begin{thm}\label{SCFAuniq}
Let $\qbit{\psi}$ be a Frobenius state.
If there is $\qeff{\xi}$ such that $\qbit{\psi}$ with $\qeff{\xi}$ induces an SCFA, then $\qbit{\psi}$ is SLOCC-equivalent to $\qbit{\G}$.
\end{thm}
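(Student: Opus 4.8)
The plan is to show that the SCFA condition (\ref{SCFAcond}) forces the induced algebra to be (up to isomorphism of CFAs) the canonical "copying" algebra on $\C^3$, which is exactly the one induced by $\qbit{\G}$, and hence $\qbit{\psi}$ is SLOCC-equivalent to $\qbit{\G}$. The key structural fact I would invoke is the standard result (provable purely diagrammatically from the Frobenius and SCFA axioms, essentially as in \cite{BCAK} for qubits) that a special commutative Frobenius algebra in $\mathbf{FdHilb}$ is precisely the algebra of functions on a finite set with pointwise multiplication; equivalently, its comultiplication $\gdelta$ copies an orthonormal basis, $\gdelta\qbit{e_i} = \qbit{e_ie_i}$, and $\gmu$ is the corresponding partial function. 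Since the underlying object is $\C^3$, this basis has exactly three elements.

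First I would take the CFA $(\C^3,\gmu,\geta,\gdelta,\gepsilon)$ induced by $\qbit{\psi}$ with $\qeff{\xi}$ via Theorem \ref{StatetoCFA}, and use the SCFA condition together with the Frobenius law and commutativity to extract a distinguished orthonormal basis $\{\qbit{e_0},\qbit{e_1},\qbit{e_2}\}$ of $\C^3$ in which $\gdelta$ acts by copying. (Concretely: the operator $\gmu\circ\gdelta\colon \C^3\to\C^3$ is self-adjoint and idempotent when the SCFA equation holds after the appropriate bending of wires; its $+1$-eigenvectors, suitably normalized against the "copyable" condition $\gdelta x = x\otimes x$, give the basis. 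Special-ness is exactly what rules out nilpotents here, so there are no Jordan-block obstructions.) Once I have this basis, $\gmu$, $\geta$, $\gepsilon$ are all determined: $\gmu = \sum_i \qact{e_i}{e_ie_i}$, $\geta = \sum_i \qbit{e_i}$, $\gepsilon = \sum_i \qeff{e_i}$.

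Next I would reconstruct the Frobenius state from this CFA using Theorem \ref{CFAtoState}, namely $\qbit{\Psi} = \gmu$ read as a state (i.e. $\mu$ composed with $\eta$ on one leg, or more precisely the state $\sum_i \qbit{e_ie_ie_i}$ obtained by the prescription in Theorem \ref{CFAtoState}). By the remark following Theorem \ref{StatetoCFA}, the CFA induced by $\qbit{\Psi}$ with its canonical $\qeff{\xi}$ is the same CFA we started from, and since a Frobenius state is determined up to SLOCC by its induced CFA data (this is where I use that the passage CFA $\to$ state $\to$ CFA is the identity, so two Frobenius states inducing isomorphic CFAs are SLOCC-equivalent), it follows that $\qbit{\psi}$ is SLOCC-equivalent to $\sum_{i=0}^{2}\qbit{e_ie_ie_i}$. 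Finally, choosing the invertible $L\colon \C^3\to\C^3$ sending $\qbit{e_i}\mapsto\qbit{i}$ and applying $L\otimes L\otimes L$ gives $\qbit{000}+\qbit{111}+\qbit{222} = \qbit{\G}$, so $\qbit{\psi}$ is SLOCC-equivalent to $\qbit{\G}$.

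The main obstacle is the first step: rigorously extracting the copying basis from the SCFA axiom without circularity. One must show that specialness rules out the "$\W$-like" and "$\I$-like" algebras — i.e. that the idempotent $\gmu\circ\gdelta$ is not merely idempotent but has enough eigenvectors that are genuinely $\gdelta$-copyable, and that $\geta$ decomposes as their sum. For qubits this was handled in \cite{BCAK} by direct classification; for qutrits I expect the cleanest route is to cite the classification theorem for SCFAs in $\mathbf{FdHilb}$ (commutative special Frobenius algebras $\cong$ finite-dimensional commutative $C^*$-algebras $\cong$ $\C^n$ with pointwise operations) and observe that on a $3$-dimensional space the only such algebra is $\C^3$, whose associated Frobenius state is exactly $\qbit{\G}$. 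Everything after that is the bookkeeping of Theorems \ref{CFAtoState} and \ref{StatetoCFA} plus a change of basis.
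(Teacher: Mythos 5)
Your proposal is correct and follows essentially the same route as the paper: the paper's proof simply cites the classification result that the copyable vectors of $\gdelta$ for an SCFA form an orthogonal basis of $\C^3$, takes $L$ to be the invertible map sending $\qbit{0},\qbit{1},\qbit{2}$ to those copyable vectors, and concludes $\qbit{\psi}=(L\otimes L\otimes L)\qbit{\G}$. Your additional discussion of extracting the basis from the idempotent $\gmu\circ\gdelta$ is scaffolding you yourself replace by the same citation, so the two arguments coincide in substance.
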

\begin{proof}
Three copyable vectors of $\delta$ form an orthogonal basis for $\C^3$ \cite{BCDPJV}.
Hence, we can get an invertible matrix $L$ that converts $\qbit{0}$, $\qbit{1}$, and $\qbit{2}$ into copyable vectors. This $L$ satisfies
\begin{equation}
\qbit{\psi} = (L \otimes L \otimes L)\qbit{\G}.
\end{equation}
\end{proof}

\begin{thm}
Let $\qbit{\psi}$ be a Frobenius state.
If there is $\qeff{\xi}$ such that $\qbit{\psi}$ with $\qeff{\xi}$ induces an ACFA, then $\qbit{\psi}$ is SLOCC-equivalent to $\qbit{\W}$.
\end{thm}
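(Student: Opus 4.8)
The plan is to mirror the proof of Theorem~\ref{SCFAuniq}: exhibit a single invertible $3\times 3$ matrix $L$ with $\qbit{\psi}=(L\otimes L\otimes L)\qbit{\W}$, after which SLOCC-equivalence follows from the matrix characterisation of SLOCC-equivalence recalled in Section~2. By Theorem~\ref{StatetoCFA} the hypothesis supplies a concrete ACFA $\mathcal{A}$ on $\C^{3}$ whose attached Frobenius state (via Theorem~\ref{CFAtoState}) is $\qbit{\psi}$; and since that state is built functorially from the unit and comultiplication of $\mathcal{A}$, any CFA-isomorphism $L\colon\W\to\mathcal{A}$ (a linear isomorphism intertwining $\mu,\eta,\delta,\epsilon$) carries $\qbit{\W}$ to $\qbit{\psi}$ under its threefold tensor power. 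So it suffices to prove the structural statement: every anti-special commutative Frobenius algebra on a three-dimensional space is isomorphic, as a CFA, to $\W$.

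First I would unpack the anti-special equation~(\ref{ACFAcond}) algebraically, reading $(\mu,\eta)$ as a commutative unital algebra structure on $\C^{3}$ and $\epsilon\circ\mu$ as its non-degenerate Frobenius pairing. The target claim --- the ACFA counterpart of the fact that the $\delta$-copyable vectors of an SCFA form an orthogonal basis, which powered Theorem~\ref{SCFAuniq} --- is that the anti-special law forces $(\C^{3},\mu,\eta)$ to be a \emph{local} algebra with nilpotency index exactly $3$, i.e.\ isomorphic to $\C[x]/(x^{3})$, and in addition forces $\epsilon$ to vanish off the one-dimensional socle. Concretely: $\eta(1)$ is the unit, the maximal ideal is spanned by some $x$ with $x^{2}\neq 0$ and $x^{3}=0$, the socle is $\langle x^{2}\rangle$, and the comultiplication comes out as the divided-power comultiplication $\delta(x^{k})=\sum_{i+j=k}x^{i}\otimes x^{j}$ --- exactly the pattern of the data defining $\W$.

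Given this, I would choose the generator $x$ of the maximal ideal, rescale it by the scalar determined by the value of the Frobenius form on the socle so that $\epsilon(x^{2})=1$, and set $L\qbit{0}=x^{2}$, $L\qbit{1}=x$, $L\qbit{2}=\eta(1)$. Then $L$ is invertible and, by the previous paragraph, intertwines the $\mu,\eta,\delta,\epsilon$ of $\W$ with those of $\mathcal{A}$; hence $\mathcal{A}\cong\W$ as CFAs, and $\qbit{\psi}=(L\otimes L\otimes L)\qbit{\W}$, so $\qbit{\psi}$ is SLOCC-equivalent to $\qbit{\W}$, as required. (For the algebra $\W$ itself this reduces to the bookkeeping $1=\qbit{2}$, $x=\qbit{1}$, $x^{2}=\qbit{0}$.)

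The hard part will be the structural step of the second paragraph: converting the graphical anti-special identity into ``$\mathcal{A}\cong\C[x]/(x^{3})$ with $\epsilon$ concentrated on the socle'', and in particular ruling out that the diagonal algebra $\C^{3}$ (the $\overline{\qbit{\G}}$ case) or $\C\times\C[\varepsilon]/(\varepsilon^{2})$ (the $\overline{\qbit{\I}}$ case) can ever be made anti-special --- the same kind of verification by which $\G$ and $\W$ were shown not to be ISCFAs. A less self-contained alternative would sidestep this: a Frobenius state is SLOCC-equivalent to one of $\qbit{\G}$, $\qbit{\W}$, $\qbit{\I}$ (there are only three Frobenius classes), and by Theorem~\ref{slocccond} and its corollary a state in $\overline{\qbit{\G}}$ induces an SCFA, and one in $\overline{\qbit{\I}}$ an ISCFA, for suitable choices of $\qeff{\xi}$; a direct check that an algebra of either type cannot simultaneously be anti-special then leaves only $\qbit{\psi}$ SLOCC-equivalent to $\qbit{\W}$.
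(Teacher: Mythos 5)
Your proposal takes a genuinely different route from the paper, but its central structural claim is false, so there is a real gap. The paper argues by elimination: $\qbit{\psi}$ lies in one of the three Frobenius classes $\overline{\qbit{\G}}$, $\overline{\qbit{\W}}$, $\overline{\qbit{\I}}$; if it were in $\overline{\qbit{\G}}$ or $\overline{\qbit{\I}}$, Theorem~\ref{slocccond} would transfer the hypothesised ACFA structure onto the representative $\qbit{\G}$ (resp.\ $\qbit{\I}$) for some $\qeff{\xi'}=\alpha\qeff{0}+\beta\qeff{1}+\gamma\qeff{2}$, and a direct computation over all admissible $\alpha,\beta,\gamma$ shows no such $\qeff{\xi'}$ yields an ACFA. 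You instead try to classify all three-dimensional ACFAs up to CFA-isomorphism and claim the only one is $\W$. Half of this is right: the anti-special law forces $\mu\circ\delta$ (multiplication by the handle element $h=\mu(\delta(\eta(1)))$) to have rank $1$, which rules out the semisimple algebra $\C^3$ (rank $3$) and $\C\times\C[\varepsilon]/(\varepsilon^2)$ (rank $2$), leaving $\C[x]/(x^3)$. The other half fails: anti-speciality does \emph{not} force $\epsilon$ to vanish off the socle. For $\C[x]/(x^3)$ one computes $h=\tfrac{3}{\epsilon(x^2)}\,x^2$ for \emph{every} non-degenerate $\epsilon$, and every constituent of (\ref{ACFAcond}) --- the bubble, the loop-decorated point and copoint, the circle scalar --- is unchanged as $\epsilon(1)$ and $\epsilon(x)$ vary; so, e.g., $\epsilon(1)=\epsilon(x^2)=1$, $\epsilon(x)=0$ also gives an ACFA. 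But a CFA-isomorphism intertwines $\eta$ and $\epsilon$ and hence preserves the scalar $\epsilon(\eta(1))$, which is $1$ for that algebra and $0$ for $\W$. So the intertwiner $L$ you want does not exist in general, and $\qbit{\psi}=(L\otimes L\otimes L)\qbit{\W}$ does not follow. What is missing is exactly the hard content: that the spider states of this one-parameter family of mutually non-isomorphic ACFAs are all SLOCC-equivalent.

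Your fallback route is essentially the paper's, but as stated it also has a gap: Theorem~\ref{slocccond} and its corollary give you an SCFA induced by a $\overline{\qbit{\G}}$-state \emph{for some} $\qeff{\xi''}$, while the hypothesis gives an ACFA \emph{for a possibly different} $\qeff{\xi}$. These are two distinct algebras on the same state, so checking that one algebra cannot be simultaneously special and anti-special (true, by the rank of $\mu\circ\delta$) yields no contradiction. What must be excluded is that \emph{any} admissible $\qeff{\xi}$ on a $\overline{\qbit{\G}}$- or $\overline{\qbit{\I}}$-state induces an ACFA; the paper does this by moving the hypothetical ACFA onto the canonical representative and quantifying over all $\qeff{\xi'}$ there. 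If you want to keep your algebraic viewpoint, the statement to prove is that the isomorphism type of the underlying algebra $(\mu,\eta)$ --- hence the rank of $\mu\circ\delta$ --- is independent of the choice of $\qeff{\xi}$ for a fixed Frobenius state; that would make either route rigorous, but it needs an argument you have not supplied.
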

\begin{proof}
$\qbit{\psi}$ is a Frobenius state, so $\qbit{\psi}$ is SLOCC-equivalent to one of the states $\qbit{\G}$, $\qbit{\W}$, and $\qbit{\I}$.

First, assume $\qbit{\psi}$ be SLOCC-equivalent to $\qbit{\G}$.
According to Theorem \ref{slocccond}, there is $\qeff{\xi'}$ such that $\qbit{\G}$ with $\qeff{\xi'}$ induces an ACFA.
Let $\qeff{\xi'}:= \alpha\qeff{0} + \beta\qeff{1} + \gamma\qeff{2}$ with arbitrary complex numbers $\alpha$, $\beta$, and $\gamma$.
Due to the strong SLOCC-maximal condition (\ref{sloccmax}), $\alpha$, $\beta$, and $\gamma$ are restricted to nonzero.
A CFA induced by $\qbit{\G}$ with $\qeff{\xi'}$ does not satisfy the ACFA condition.
This contradicts the assumption.

By the same argument, a CFA induced by $\qbit{\I}$ with $\qeff{\xi'}$ is not an ACFA.
Therefore, $\qbit{\psi}$ is SLOCC-equivalent to $\qbit{\W}$.
\end{proof}

\begin{thm}
Let $\qbit{\psi}$ be a Frobenius state.
If there is $\qeff{\xi}$ such that $\qbit{\psi}$ with $\qeff{\xi}$ induces an ISCFA, then $\qbit{\psi}$ is SLOCC-equivalent to $\qbit{\I}$.
\end{thm}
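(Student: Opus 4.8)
The plan is to follow the strategy of the preceding theorem (the ACFA case) verbatim, with $\qbit{\I}$ and the ISCFA conditions playing the roles that $\qbit{\W}$ and the ACFA condition played there. Since $\qbit{\psi}$ is a Frobenius state it is symmetric, hence SLOCC-equivalent to one of $\qbit{\G}$, $\qbit{\W}$, $\qbit{\I}$, so it suffices to rule out the first two possibilities.

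Suppose first that $\qbit{\psi}$ is SLOCC-equivalent to $\qbit{\G}$. Applying Theorem~\ref{slocccond} with $\qbit{\psi}$ (which induces an ISCFA with $\qeff{\xi}$) in the role of the source state, and using that $\qbit{\G}$ is SLOCC-equivalent to it, we obtain some $\qeff{\xi'}$ such that $\qbit{\G}$ with $\qeff{\xi'}$ induces an ISCFA. Write $\qeff{\xi'}:=\alpha\qeff{0}+\beta\qeff{1}+\gamma\qeff{2}$; the strong SLOCC-maximal condition (\ref{sloccmax}) forces $\alpha,\beta,\gamma$ all nonzero, and then Theorem~\ref{StatetoCFA} produces an explicit CFA (the same shape as the SCFA $\G$, rescaled by these coefficients). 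A direct computation of the left-hand sides of (\ref{ISCFAcond1}) and (\ref{ISCFAcond2}) for this CFA shows that at least one of the two ISCFA equations fails for every admissible choice of $\alpha,\beta,\gamma$, contradicting the assumption. Hence $\qbit{\psi}$ is not SLOCC-equivalent to $\qbit{\G}$.

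The case where $\qbit{\psi}$ is SLOCC-equivalent to $\qbit{\W}$ is handled identically: Theorem~\ref{slocccond} yields $\qeff{\xi'}=\alpha\qeff{0}+\beta\qeff{1}+\gamma\qeff{2}$ (again with $\alpha,\beta,\gamma$ nonzero by (\ref{sloccmax})) such that $\qbit{\W}$ with $\qeff{\xi'}$ induces an ISCFA, and one checks by the same kind of calculation that the resulting CFA never satisfies both (\ref{ISCFAcond1}) and (\ref{ISCFAcond2}). Having eliminated $\qbit{\G}$ and $\qbit{\W}$, the only remaining possibility is that $\qbit{\psi}$ is SLOCC-equivalent to $\qbit{\I}$; this is consistent, since $\qbit{\I}$ with $\qeff{0}+\qeff{2}$ already induces an ISCFA.

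The main obstacle is purely computational: the two case analyses require writing down the CFA induced by $\qbit{\G}$ (resp.\ $\qbit{\W}$) with a general admissible $\qeff{\xi'}$ via Theorem~\ref{StatetoCFA} --- which first means solving (\ref{sloccmax}) for the corresponding $\qeff{\Phi}$ --- and then substituting into the ISCFA defining equations to confirm that no parameter values work. These calculations are of the same nature as those already carried out for $\qbit{s_2}$ and $\qbit{s_3}$, so no new technique is needed; the only care required is to keep track of the nonzero constraints on the parameters and to verify that \emph{both} ISCFA equations, not just one, are tested.
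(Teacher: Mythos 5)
Your proposal is correct and matches the paper's intent exactly: the paper's own proof is literally ``this can be proved in the same way as the previous theorem,'' i.e.\ use the symmetric-state classification to reduce to the three candidates $\qbit{\G}$, $\qbit{\W}$, $\qbit{\I}$, then apply Theorem~\ref{slocccond} and a direct computation with a general $\qeff{\xi'}=\alpha\qeff{0}+\beta\qeff{1}+\gamma\qeff{2}$ (all coefficients nonzero by strong SLOCC-maximality) to rule out $\qbit{\G}$ and $\qbit{\W}$. Your added remark that both ISCFA equations must be checked is a sensible point of care but does not change the argument.
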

\begin{proof}
This can be proved in the same way as the previous theorem.
\end{proof}

\subsection{Multiple Commutative Frobenius Algebras}
In \cite{BCAK}, it was shown that some pairs of two-dimensional SCFAs and ACFAs, which the authors called GHZ/W-pairs, can represent $N$-partite qubits. Here, we show that an SCFA, an ACFA, and an ISCFA can represent any qutrit.

We define a CFA trio $\T:= (\G, \W, \I)$, and the following arrows.
\begin{dfn}[Tick, Knurl, Wave]\rm
\begin{eqnarray}
\gph{-3}{4}{4}{0.3}{tick} &:=& \gph{-3}{4}{4}{0.3}{tickleft}\\
\gph{-3}{4}{4}{0.3}{knurl} &:=& \gph{-3}{4}{4}{0.3}{knurlleft}\\
\gph{-3}{4}{4}{0.3}{wave} &:=& \gph{-3}{4}{4}{0.3}{wavedef}
\end{eqnarray}
\end{dfn}

First, we show how $\T$ can compose functions with the assistance of single vectors.
\begin{thm}\label{homcomp}
Any invertible $3 \times 3$ matrix $F$ can be represented by $\T$ and single qutrits.
\end{thm}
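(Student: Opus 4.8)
The plan is to decompose an arbitrary invertible $3\times 3$ matrix $F$ into a product of elementary building blocks that we already know how to realise as $\T$-graphs, and then to assemble these blocks by composition. I would first record the elementary matrices I need: (a) permutation matrices on the three basis vectors, (b) diagonal matrices $\mathrm{diag}(\lambda,1,1)$ and its relatives with a single nonzero scaling on one coordinate, and (c) shear matrices of the form $I + \lambda\,\qact{i}{j}$ with $i \neq j$. By Gaussian elimination (LU-type decomposition together with pivoting), every invertible $F$ is a finite product of matrices of these three kinds, so it suffices to realise each kind by $\T$ together with single qutrits.

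Next I would realise each elementary block graphically. For permutations: the transposition of basis vectors $\qbit{i}$ and $\qbit{j}$ is a linear map, and using the SCFA $\G$ (whose copyable vectors are exactly $\qbit{0},\qbit{1},\qbit{2}$), together with the counit/unit of $\G$ applied to the appropriate basis effects and states, one builds the requisite "swap of two of the three classical values" as a $\G$-graph decorated with single vectors $\qbit{i},\qeff{j}$ etc. For the single-coordinate scalings $\mathrm{diag}(\lambda,1,1)$: this is $I + (\lambda-1)\qact{0}{0}$, and $\qact{0}{0}$ is just the $\G$-graph $\gdelta$ followed by $\gmu$ restricted via the effect $\qeff{0}$ on one leg — i.e. "copy, kill one copy with $\qeff{0}$, re-prepare $\qbit{0}$" — scaled by the scalar $(\lambda-1)$, which is itself a single-qutrit construction $(\lambda-1)\qbit{0}$ composed with $\qeff{0}$; adding the identity is done by a $\W$-type (non-special) junction that implements addition of two maps, exactly as the ACFA was used in \cite{BCAK} to realise sums. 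For the shears $I + \lambda\,\qact{i}{j}$: same recipe, replacing $\qact{0}{0}$ by $\qact{i}{j}$, which is $\G$'s $\gdelta$/$\gmu$ sandwiched between $\qeff{j}$ on an input leg and $\qbit{i}$ on an output leg. The key technical point, which I would state as a small sub-lemma, is that $\W$ (the ACFA) provides an "addition" operation on parallel wires and $\G$ (the SCFA) provides the classical "structure maps" $\qact{i}{j}$; the role of the ISCFA $\I$ here is secondary — it is needed only so that $\T$ is the full trio, but the construction of an arbitrary $F$ can in fact be carried out with $\G$ and $\W$ alone, mirroring the GHZ/W-pair construction for qubits.

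Then I would conclude by composition: given $F = E_1 E_2 \cdots E_m$ with each $E_\ell$ one of the elementary blocks above, the $\T$-graph for $F$ is the vertical juxtaposition of the $\T$-graphs for $E_1,\dots,E_m$, since composition of linear maps is vertical juxtaposition of graphs. Because each $E_\ell$ has been exhibited as an $F$-graph of members of $\T$ decorated by single qutrit states and effects, so has $F$, which is exactly the claim. I would note that invertibility of $F$ guarantees the Gaussian-elimination factorisation terminates with only the allowed elementary factors (in particular no "zero pivot" degenerate blocks occur), and that all scalars appearing are handled by the single-qutrit vectors $\alpha\qbit{i}$, which are permitted by the statement.

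The main obstacle I anticipate is not the algebra of the decomposition but verifying, graphically and honestly, that the "addition" gadget built from $\W$ genuinely computes $f + g$ on the nose (including scalars), i.e.\ that combining two parallel maps through the ACFA comultiplication-then-multiplication pattern yields their sum rather than their sum up to some correction term; this is the place where the special vs.\ anti-special distinction bites, and it is the step I would check most carefully, presumably by the same topological-normal-form reasoning used for the GHZ/W-pairs in \cite{BCAK}. A secondary nuisance is bookkeeping the three basis labels consistently across permutation blocks so that the $\qact{i}{j}$ gadgets are attached to the correct wires; this is routine but error-prone.
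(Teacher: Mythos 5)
Your top-level strategy (factor $F$ into permutation, diagonal, and unipotent triangular pieces; realise each piece as a decorated graph; compose vertically) is the same as the paper's, which uses the decomposition $F=PLDUP'$. The gap is in the step you yourself flag as the one to "check most carefully": the claim that the ACFA $\W$ supplies an addition gadget turning two parallel maps into their sum, which you need both for the shears $I+\lambda\,\qact{i}{j}$ and (implicitly) for the permutation matrices, since a permutation matrix is a \emph{sum} of three rank-one maps $\qact{i}{j}$. The natural candidate gadget $f,g\mapsto\mu_{\W}\circ(f\otimes g)\circ\delta_{\W}$ does not compute $f+g$: already for $f=g=1$ it gives $\mu_{\W}\circ\delta_{\W}$, which by anti-specialness (\ref{ACFAcond}) is a rank-one map, not $2\cdot 1$. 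Concretely, with the paper's $\W$-algebra one has $\mu_{\W}(\delta_{\W}\qbit{0})=\mu_{\W}\qbit{00}=0$. So the anti-special condition does not merely "bite" here — it kills the construction, and no replacement gadget is supplied. Your further assertion that $\G$ and $\W$ alone suffice and that $\I$ plays only a decorative role is likewise unargued; the paper's Tick/Knurl/Wave arrows, which it uses to realise the permutations $P,P'$, are built from the full trio $\T$.

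The paper sidesteps addition entirely by exploiting the \emph{multiplicative} structure with one input plugged by a single qutrit. For the diagonal factor, $\mu_{\G}(\qbit{\psi}\otimes-)$ with $\qbit{\psi}=d_0\qbit{0}+d_1\qbit{1}+d_2\qbit{2}$ is exactly $\mathrm{diag}(d_0,d_1,d_2)$. For the unipotent factors, $\mu_{\W}(\qbit{v}\otimes-)$ with $\qbit{v}=a\qbit{0}+b\qbit{1}+c\qbit{2}$ is the upper-triangular Toeplitz matrix with diagonal $c$, superdiagonal $b$ and corner $a$; combining such maps with $\G$-diagonals breaks the Toeplitz constraint and yields an arbitrary unit upper (or lower) triangular matrix, which is why the proof introduces the specific vectors $\qbit{\phi},\qbit{\pi}$ (and $\qbit{\eta},\qbit{\zeta}$) with the shifted coefficients $u_2-u_0$, etc. If you want to salvage your finer shear-by-shear factorisation, you would have to either exhibit a genuine addition gadget in the $\T$-calculus (none is available off the shelf) or, more simply, replace the shears by these one-qutrit-decorated multiplications, at which point you have reproduced the paper's proof.
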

\begin{proof}
$F$ can be represented by an LDU decomposition as
\begin{equation}
F = PLDUP'
\end{equation}
$P$ and $P'$ are permutations, $L$ is a lower triangle matrix, $U$ is an upper triangle matrix, and $D$ is a diagonal matrix. Moreover, the diagonal elements of $L$ and $U$ are all $1$. In other words, $L$, $U$, and $D$ are
\begin{equation}
L = \mat{1}{0}{0}{l_2}{1}{0}{l_1}{l_0}{1}, \hspace{20pt}
D = \mat{d_0}{0}{0}{0}{d_1}{0}{0}{0}{d_2}, \hspace{20pt}
U = \mat{1}{u_0}{u_1}{0}{1}{u_2}{0}{0}{1}
\end{equation}
We define $\qbit{\psi}$, $\qbit{\phi}$, $\qbit{\pi}$, $\qbit{\eta}$, and $\qbit{\zeta}$ as
\begin{eqnarray}
\qbit{\psi} &:=& d_0\qbit{0}+d_1\qbit{1}+d_2\qbit{2}\\
\qbit{\phi} &:=& (u_2-u_0)\qbit{0}+\qbit{1}+\qbit{2}\\
\qbit{\pi} &:=& u_1\qbit{0}+u_0\qbit{1}+\qbit{2}\\
\qbit{\eta} &:=& (l_2-l_0)\qbit{0}+\qbit{1}+\qbit{2}\\
\qbit{\zeta} &:=& l_1\qbit{0}+l_0\qbit{1}+\qbit{2}
\end{eqnarray}
Using these qutrits, $L$, $D$, and $U$ are represented as
\begin{equation}
L = \gph{-8}{8}{8}{0.3}{matrixl}
\hspace{9ex}
D = \gph{-3}{4}{4}{0.3}{matrixd}
\hspace{9ex}
U = \gph{-8}{8}{8}{0.3}{matrixu}
\end{equation}
Any permutation can be composed of $\gph{-1}{2}{0}{0.18}{tick}$, $\gph{-1}{2}{0}{0.18}{knurl}$, and $\gph{-1}{2}{0}{0.18}{wave}$.
\end{proof}
Therefore, if a qutrit can be composed of $\T$ and single vectors, then all qutrits that are SLOCC-equivalent to another qutrit are composed of $\T$ and single vectors.

Moreover, $\T$ can `bring up' qutrits. In \cite{BCAK}, the authors defined a `quantum multiplexer' of qubits, which they called a QMUX. Here we define the qutrit version of QMUX.
\begin{dfn}[QMUX]\rm
A QMUX is 
\begin{eqnarray}
\gph{-4}{5}{5}{0.3}{qmux}
:=\gph{-13}{13}{12}{0.3}{qmuxdef}
\end{eqnarray}
\end{dfn}
\begin{thm}
$\T$ can compose a function that converts single qutrits $\qbit{\psi}\otimes\qbit{\phi}\otimes\qbit{\zeta}$ into $\qbit{0\psi}+\qbit{1\phi}+\qbit{2\zeta}$.
\end{thm}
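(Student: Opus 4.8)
The plan is to use the QMUX as the crucial gadget together with the three single qutrits $\qbit{\psi}$, $\qbit{\phi}$, $\qbit{\zeta}$ that we wish to ``bring up'' into superposition indexed by the first tensor factor. First I would unfold the definition of the QMUX in terms of $\T$: by its defining equation it is an $F$-graph built from the three CFAs $\G$, $\W$, $\I$ (together with $\Tick$, $\Knurl$, $\Wave$ when permutations are needed), so that by Theorem~\ref{homcomp} and the surrounding discussion everything in sight is already expressed via $\T$ and single qutrits. The target function is the one sending $\qbit{\psi}\otimes\qbit{\phi}\otimes\qbit{\zeta}$ to $\qbit{0}\otimes\qbit{\psi}+\qbit{1}\otimes\qbit{\phi}+\qbit{2}\otimes\qbit{\zeta}$; so the statement amounts to checking that plugging the three states into the three input legs of the QMUX, and reading off the output, yields exactly this.

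The key computational step is to evaluate the QMUX on basis inputs. I would feed $\qbit{i}\otimes\qbit{j}\otimes\qbit{k}$ into the QMUX and compute using the explicit descriptions of $\G$, $\W$, $\I$ given above (their $\mu$, $\delta$, $\eta$, $\epsilon$ in the canonical basis). Because $\G$ is the ``copy/GHZ'' SCFA, its $\delta$ duplicates a basis vector and its $\epsilon$ deletes it; because $\W$ is the ACFA and $\I$ the ISCFA, they contribute the ``selection'' behaviour that picks out the correct input leg according to the control index. Concretely, the design of the QMUX should be such that $\G$ fans out a control wire, the control index $0$, $1$, or $2$ is used (via $\W$/$\I$ and the counits $\epsilon$) to kill two of the three data legs and keep the third, and $\G$ again recombines the surviving data wire with the control label. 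Carrying this out on each of the $27$ basis triples $\qbit{ijk}$ and summing by linearity gives the claimed output $\qbit{0\psi}+\qbit{1\phi}+\qbit{2\zeta}$ once the single vectors are substituted back in.

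The main obstacle I anticipate is bookkeeping: verifying that the particular wiring in the definition of the QMUX really does implement ``select leg by control value'' without spurious cross terms. The ISCFA $\I$ is only \emph{intermediate}-special, so $\delta$ composed with $\mu$ is not the identity, and one must be careful that the ``wrong'' legs are annihilated cleanly (their contributions must vanish exactly, not merely approximately) while the ``right'' leg passes through untouched; this is exactly where the ISCFA conditions (\ref{ISCFAcond1}) and (\ref{ISCFAcond2}) and the orthogonality of copyable vectors for $\G$ get used. A secondary nuisance is tracking the permutation wires ($\Tick$, $\Knurl$, $\Wave$) so that the output legs come out in the order $\qbit{0\psi}+\qbit{1\phi}+\qbit{2\zeta}$ rather than some permutation thereof. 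Once the basis-vector calculation is checked, linearity finishes the proof immediately, and then combining with Theorem~\ref{homcomp} shows that $\T$ together with single qutrits suffices to build the bring-up map, completing the construction of an arbitrary qutrit.
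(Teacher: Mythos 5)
There is a genuine gap at the central computational step. You assert that feeding $\qbit{\psi}\otimes\qbit{\phi}\otimes\qbit{\zeta}$ into the QMUX and reading off the output ``yields exactly'' $\qbit{0\psi}+\qbit{1\phi}+\qbit{2\zeta}$, and that the ``wrong'' legs are ``annihilated cleanly'' with the right leg passing through ``untouched.'' That is not what happens. The selection mechanism in the QMUX deletes the unused data legs by pairing them against effects of the form $\qeff{2}$ (coming from the counits/units of the algebras in $\T$, whose distinguished vector is $\qbit{2}$), so each deleted leg contributes an inner product rather than a clean factor of $1$. Tracing the wires, the bare QMUX actually sends $\qbit{\psi}\otimes\qbit{\phi}\otimes\qbit{\zeta}$ to
\begin{equation}
\qin{\phi}{2}\qin{\zeta}{2}\,\qbit{0\psi}+\qin{\zeta}{2}\qin{\psi}{2}\,\qbit{1\phi}+\qin{\psi}{2}\qin{\phi}{2}\,\qbit{2\zeta},
\end{equation}
i.e.\ the target state only up to branch-dependent scalar coefficients. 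Your basis-vector bookkeeping, done correctly, would surface exactly these coefficients, so the conclusion you draw from it does not follow.

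The missing idea is the correction step: one must post-compose the QMUX with the invertible diagonal matrix $L$ whose entries are the reciprocals $1/(\qin{\phi}{2}\qin{\zeta}{2})$, $1/(\qin{\zeta}{2}\qin{\psi}{2})$, $1/(\qin{\psi}{2}\qin{\phi}{2})$ on the control (first) output wire; by Theorem~\ref{homcomp} this $L$ is itself representable by $\T$ and single qutrits, so the composite is still built from the allowed ingredients and now genuinely produces $\qbit{0\psi}+\qbit{1\phi}+\qbit{2\zeta}$. You do invoke Theorem~\ref{homcomp} at the end, but only to justify that the QMUX and arbitrary linear maps are expressible via $\T$, not to repair the output coefficients, so the gap remains. (Note also that this repair tacitly assumes $\qin{\psi}{2},\qin{\phi}{2},\qin{\zeta}{2}\neq 0$; your claim of exact annihilation would, if true, have avoided that caveat, which is another sign the wire-tracing was not carried through.)
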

\begin{proof}
Tracing each line, we can verify that QMUX converts single qutrits $\qbit{\psi}\otimes\qbit{\phi}\otimes\qbit{\zeta}$ into \[\qin{\phi}{2}\qin{\zeta}{2}\qbit{0\psi}+\qin{\zeta}{2}\qin{\psi}{2}\qbit{1\phi}+\qin{\psi}{2}\qin{\phi}{2}\qbit{2\zeta}.\] Based on Theorem \ref{homcomp}, there is an invertible matrix $L$ such that 
\begin{equation}
L = \mat{\frac{1}{\qin{\phi}{2}\qin{\zeta}{2}}}{0}{0}{0}{\frac{1}{\qin{\zeta}{2}\qin{\psi}{2}}}{0}{0}{0}{\frac{1}{\qin{\psi}{2}\qin{\phi}{2}}}
\end{equation}
\begin{equation}
\gph{-5}{5}{5}{0.3}{qmuxplus}
\end{equation}
converts $\qbit{\psi}\otimes\qbit{\phi}\otimes\qbit{\zeta}$ into $\qbit{0\psi}+\qbit{1\phi}+\qbit{2\zeta}$.
\end{proof}
 Furthermore, by aligning QMUXs, this can be extended to the case of $N$-partite qutrits.
\begin{thm}
Let $\qbit{\psi}$, $\qbit{\phi}$, and $\qbit{\zeta}$ be $N$-partite qutrits. Then
\begin{eqnarray}
\gph{-17}{17}{17}{0.3}{qmuxn}
\end{eqnarray}
is $\qbit{0\psi}+\qbit{1\phi}+\qbit{2\zeta}$ with some invertible matrix $L$.
\end{thm}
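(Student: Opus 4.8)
The plan is to realise the diagram in the theorem statement as $N$ side-by-side copies of the single-qutrit QMUX, the $k$-th copy acting on the $k$-th tensor factor of each of $\qbit{\psi}$, $\qbit{\phi}$, and $\qbit{\zeta}$; the $N$ resulting index output wires are then fused into one $\C^3$-wire by the $\G$-spider with $N$ inputs and one output (the $N$-fold SCFA multiplication of $\G$, which keeps a common index and annihilates mismatches), and the $N$ data output wires are permuted into the positions $2,\dots,N+1$. All of this wiring, including the sorting permutation, is composable from $\T$ and single qutrits by Theorem~\ref{homcomp}. Since every constituent arrow is linear, it suffices to evaluate the resulting linear map $(\C^3)^{\otimes N}\otimes(\C^3)^{\otimes N}\otimes(\C^3)^{\otimes N}\to(\C^3)^{\otimes(N+1)}$ on basis vectors $\qbit{a_1\cdots a_N}\otimes\qbit{b_1\cdots b_N}\otimes\qbit{c_1\cdots c_N}$ and extend multilinearly; this is exactly the trace in the single-qutrit proof carried out once per factor.

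First I would trace one factor. By the single-qutrit QMUX theorem (the previous theorem), on basis states the $k$-th QMUX sends $\qbit{a_k}\otimes\qbit{b_k}\otimes\qbit{c_k}$ to $\qin{b_k}{2}\qin{c_k}{2}\,\qbit{0}\qbit{a_k}+\qin{c_k}{2}\qin{a_k}{2}\,\qbit{1}\qbit{b_k}+\qin{a_k}{2}\qin{b_k}{2}\,\qbit{2}\qbit{c_k}$, the first output being the index wire. Fusing the $N$ index wires by the $\G$-spider annihilates every branch in which the $N$ QMUXs do not all select the same index $x\in\{0,1,2\}$ and leaves $\qbit{x}$ on the surviving ones. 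On the branch $x=0$ the product of the factor contributions is $\bigl(\prod_{k}\qin{b_k}{2}\bigr)\bigl(\prod_{k}\qin{c_k}{2}\bigr)\,\qbit{a_1\cdots a_N}$, i.e.\ $\qin{2\cdots 2}{b_1\cdots b_N}\,\qin{2\cdots 2}{c_1\cdots c_N}\,\qbit{a_1\cdots a_N}$ once the data wires are sorted; the branches $x=1$ and $x=2$ give the two cyclic variants. Summing against the amplitudes and writing $p:=\qin{2\cdots 2}{\psi}$, $q:=\qin{2\cdots 2}{\phi}$, $r:=\qin{2\cdots 2}{\zeta}$, the output of the fused diagram is $qr\,\qbit{0}\qbit{\psi}+rp\,\qbit{1}\qbit{\phi}+pq\,\qbit{2}\qbit{\zeta}$, which is the verbatim $N$-partite analogue of the single-qutrit QMUX output.

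To conclude, postcompose the index wire with the diagonal matrix $L:=\mathrm{diag}(1/qr,\,1/rp,\,1/pq)$, which by Theorem~\ref{homcomp} is composable from $\T$ and single qutrits; this turns the output into $\qbit{0}\qbit{\psi}+\qbit{1}\qbit{\phi}+\qbit{2}\qbit{\zeta}$, proving the statement with this $L$. The two points that need care are (i) the purely diagrammatic bookkeeping of gathering the $N$ index wires under the $\G$-spider and permuting the $N$ data wires into the correct tensor order, which is tedious but entirely absorbed by Theorem~\ref{homcomp}, and (ii) the invertibility of $L$, which requires $p,q,r\neq 0$. The latter is handled as in the single-qutrit case: if one of these amplitudes vanishes, first precompose each input with a local invertible change of basis (again $\T$-composable) that makes all three nonzero, prove the claim for the rotated states, and absorb the extra maps into the final $L$. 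I expect this last adjustment — checking that such a local change of basis can always be arranged for the three given states and commuted past the construction — to be the most delicate, though not deep, part of the argument.
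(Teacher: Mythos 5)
Your construction is exactly the one the paper intends: the theorem is stated with no proof beyond the remark that it follows ``by aligning QMUXs,'' and your expansion --- $N$ parallel single-qutrit QMUXs, fusion of the $N$ index wires by the $N$-to-$1$ $\G$-spider (which keeps only the branches where all QMUXs select the same index), a permutation of the data wires, and a final diagonal correction $L=\mathrm{diag}(1/qr,1/rp,1/pq)$ --- is the natural and correct way to fill that in, mirroring the single-qutrit proof factor by factor.

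One caution on the degenerate case you flag. If, say, $p=\qin{2\cdots 2}{\psi}=0$, then the branches carrying $rp$ and $pq$ vanish identically, and no invertible matrix acting on the index wire alone can restore them; so your phrase ``absorb the extra maps into the final $L$'' cannot be taken literally. The repair is the one you half-describe: precompose all three inputs with the \emph{same} local product map $M=M_1\otimes\cdots\otimes M_N$ chosen so that $\qin{2\cdots 2}{M\psi}$, $\qin{2\cdots 2}{M\phi}$, $\qin{2\cdots 2}{M\zeta}$ are all nonzero (a generic product covector annihilates none of three fixed nonzero vectors), run the construction to get $\qbit{0}M\qbit{\psi}+\qbit{1}M\qbit{\phi}+\qbit{2}M\qbit{\zeta}$, and then apply $1\otimes M^{-1}$ on the \emph{data} wires, each $M_k^{-1}$ being $\T$-composable by Theorem \ref{homcomp}. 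This changes the diagram slightly from the fixed picture in the statement, but the paper itself silently ignores the same vanishing-overlap issue already in the single-QMUX theorem, so you are being more careful than the source here, not less.
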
 
Therefore, if any $N$-partite qutrit is composed of $\T$ and single vectors, then any N+1-partite qutrit is also composed of $\T$ and single vectors.

\section{Related Work}

Abramsky and Coecke provided the graphical language for the categorical axioms of quantum protocols \cite{paper72}; see also \cite{Selinger2007139}.
That paper explicitly showed the graphical representation of quantum entanglement.
This was based on an interpretation of entanglement as a name and coname in a dagger compact closed category provided in \cite{paper72}, which used this interpretation to graphically express the flow of quantum information.
However, this representation was limited in that it was uniform, so it did not clarify the characteristics of the entanglement.

The paper \cite{BCAK} provided a graphical representation of the entangledness of a qubit. Using that representation, the authors classified SLOCC classes of entangled tripartite qubits.
In this paper, we use this method to express tripartite qutrits.

\section{Conclusion}

We identified Frobenius classes in tripartite qutrits.
Frobenius states require strong SLOCC-maximality and strong symmetry.
Hence, by determining which classes were not strongly SLOCC-maximal, and which infinite SLOCC classes did not have a symmetric state, we were able to obtain three Frobenius classes that were SLOCC-equivalent to $\qbit{\G}$, $\qbit{\W}$, and $\qbit{\I}$. 
Then we  classified them further. One of these corresponded to an SCFA and one to an ACFA that corresponded to tripartite qubits on $\C^2$.
The other one was an ISCFA. We also proved that their correspondences were unique. The classification used the rank of $\gph{-1.4}{2}{1}{0.2}{dual}$.
Based on the equations for SCFA and ACFA, the ranks of $\gph{-1.4}{2}{1}{0.2}{dual}$ are $\dim{\Hilbert}$ and $1$, respectively.
Our definition of an ISCFA requires the rank of $\gph{-1.4}{2}{1}{0.2}{dual}$ to be neither $\dim{\Hilbert}$ nor $1$.
The uniqueness implies the algebraic and graphical structure of Frobenius states.

Finally, we desmonstrated the utility of the three CFAs. 
They can grow $N$-partite qutrits and construct any linear function using single qutrits. Furthermore, any multipartite qutrits can be expressed graphically using a graphical representation of the CFAs on $\C^3$.

Our method for expressing qutrits graphically is an extension of that used for qubits introduced in \cite{BCAK}.
The two methods have both similarities and differences. For example, both use Frobenius states that are highly symmetric and highly entangled, as well as SCFAs and ACFAs. Additionally, the CFAs used in both methods have the ability to make $N$-partite become same-dimensional systems with the help of single systems. This implies that these CFAs have some degree of completeness. In contrast, the differences include the fact that, although four SLOCC classes are not Frobenius classes in tripartite qubits, infinite SLOCC classes are not Frobenius classes in tripartite qutrits. Moreover, all non-symmetric SLOCC classes are non-maximal classes in tripartite qubits, whereas infinite non-symmetric SLOCC classes are not non-maximal classes in tripartite qutrits. Additionally, in tripartite qutrits, there is an ISCFA that is neither an SCFA nor an ACFA. Both characteristics are caused by the higher dimension of qutrits compared to qubits. Qubits are two-dimensional, which is the lowest possible dimension of an entangled state. Because of this low dimension, qubits do not have any non-symmetric classes. The second characteristic is also caused by the rank of $\gph{-1.4}{2}{1}{0.2}{dual}$.
In qubits, the rank of a nontrivial function is limited to $1$ or $\dim{\Hilbert} = 2$.
However, in qutrits, there is an intermediate rank, i.e., $2$.
Therefore, there is an ISCFA that does not exist in qubits.

We demonstrated the correspondence between CFAs and some tripartite qutrits.
However, an infinite number of SLOCC classes are possible that do not have graphical representations reflecting their entanglement properties.
The success of Frobenius states implies the algebraic structure of tripartite qutrits. 
Another algebra is needed to express other SLOCC classes graphically to specify their features. Many classes are not symmetric; in such cases, commutative properties are not needed for the algebras.

In higher-dimensional tripartite systems, Frobenius classes may exist.
Considering the rank of $\gph{-1.4}{2}{1}{0.2}{dual}$, more classifications may be possible for higher dimensions.
However, the number of these classifications may remain finite, even though SLOCC classes are infinite.
Furthermore, an ISCFA may or may not exist at higher dimensions. If it does, research into the normal forms of CFAs with ticks and knurls such as those described in \cite{Roy2010} would be helpful.

\section*{Acknowledgments}
I would like to express my deep gratitude to Prof.~Masami Hagiya and Dr.~Yoshihiko Kakutani for their guidance and advice. I also would like to thank the referees for their comments.

\nocite{Quantum}
\bibliographystyle{eptcs}
\bibliography{bibfile}

\begin{thebibliography}{10}
\providecommand{\bibitemdeclare}[2]{}
\providecommand{\urlprefix}{Available at }
\providecommand{\url}[1]{\texttt{#1}}
\providecommand{\href}[2]{\texttt{#2}}
\providecommand{\urlalt}[2]{\href{#1}{#2}}
\providecommand{\doi}[1]{doi:\urlalt{http://dx.doi.org/#1}{#1}}
\providecommand{\bibinfo}[2]{#2}

\bibitemdeclare{inproceedings}{paper72}
\bibitem{paper72}
\bibinfo{author}{S.~Abramsky} \& \bibinfo{author}{B.~Coecke}
  (\bibinfo{year}{2004}): \emph{\bibinfo{title}{A Categorical Semantics of
  Quantum Protocols}}.
\newblock In: {\sl \bibinfo{booktitle}{Proceedings of the 19th Annual {IEEE}
  Symposium on Logic in Computer Science: {LICS} 2004}},
  \bibinfo{publisher}{{IEEE} Computer Society}, pp. \bibinfo{pages}{415--425},
  \doi{10.1109/LICS.2004.1319636}.

\bibitemdeclare{article}{PhysRevLett.85.3313}
\bibitem{PhysRevLett.85.3313}
\bibinfo{author}{Helle Bechmann-Pasquinucci} \& \bibinfo{author}{Asher Peres}
  (\bibinfo{year}{2000}): \emph{\bibinfo{title}{Quantum Cryptography with
  3-State Systems}}.
\newblock {\sl \bibinfo{journal}{Phys. Rev. Lett.}}
  \bibinfo{volume}{85}(\bibinfo{number}{15}), pp. \bibinfo{pages}{3313--3316},
  \doi{10.1103/PhysRevLett.85.3313}.

\bibitemdeclare{article}{PhysRevLett.70.1895}
\bibitem{PhysRevLett.70.1895}
\bibinfo{author}{Charles~H. Bennett}, \bibinfo{author}{Gilles Brassard},
  \bibinfo{author}{Claude Cr\'epeau}, \bibinfo{author}{Richard Jozsa},
  \bibinfo{author}{Asher Peres} \& \bibinfo{author}{William~K. Wootters}
  (\bibinfo{year}{1993}): \emph{\bibinfo{title}{Teleporting an unknown quantum
  state via dual classical and Einstein-Podolsky-Rosen channels}}.
\newblock {\sl \bibinfo{journal}{Phys. Rev. Lett.}}
  \bibinfo{volume}{70}(\bibinfo{number}{13}), pp. \bibinfo{pages}{1895--1899},
  \doi{10.1103/PhysRevLett.70.1895}.

\bibitemdeclare{article}{PhysRevA.64.012306}
\bibitem{PhysRevA.64.012306}
\bibinfo{author}{Mohamed Bourennane}, \bibinfo{author}{Anders Karlsson} \&
  \bibinfo{author}{Gunnar Bj\"ork} (\bibinfo{year}{2001}):
  \emph{\bibinfo{title}{Quantum key distribution using multilevel encoding}}.
\newblock {\sl \bibinfo{journal}{Phys. Rev. A}}
  \bibinfo{volume}{64}(\bibinfo{number}{1}), p. \bibinfo{pages}{012306},
  \doi{10.1103/PhysRevA.64.012306}.

\bibitemdeclare{article}{PhysRevLett.88.127901}
\bibitem{PhysRevLett.88.127901}
\bibinfo{author}{D.~Bru\ss{}} \& \bibinfo{author}{C.~Macchiavello}
  (\bibinfo{year}{2002}): \emph{\bibinfo{title}{Optimal Eavesdropping in
  Cryptography with Three-Dimensional Quantum States}}.
\newblock {\sl \bibinfo{journal}{Phys. Rev. Lett.}}
  \bibinfo{volume}{88}(\bibinfo{number}{12}), p. \bibinfo{pages}{127901},
  \doi{10.1103/PhysRevLett.88.127901}.

\bibitemdeclare{conference}{BCAK}
\bibitem{BCAK}
\bibinfo{author}{B.~Coecke} \& \bibinfo{author}{A.~Kissinger}
  (\bibinfo{year}{2010}): \emph{\bibinfo{title}{The compositional structure of
  multipartite quantum entanglement}}.
\newblock In: {\sl \bibinfo{booktitle}{Automata, Languages and Programming}},
  \bibinfo{volume}{37th International Colloquium, ICALP 2010, Bordeaux, France,
  July 6-10, 2010, Proceedings, Part II}, \doi{10.1007/978-3-642-14162-1\_25}.

\bibitemdeclare{article}{BCDPJV}
\bibitem{BCDPJV}
\bibinfo{author}{B.~Coecke}, \bibinfo{author}{D.~Pavlovic} \&
  \bibinfo{author}{J.~Vicary} (\bibinfo{year}{2009}): \emph{\bibinfo{title}{A
  new description of orthogonal bases}}.
\newblock {\sl \bibinfo{journal}{Mathematical Structures in Computer Science}}
  \bibinfo{note}{13 pp., to appear, arXiv:0810.0812}.

\bibitemdeclare{article}{PhysRevA.62.062314}
\bibitem{PhysRevA.62.062314}
\bibinfo{author}{W.~D\"ur}, \bibinfo{author}{G.~Vidal} \&
  \bibinfo{author}{J.~I. Cirac} (\bibinfo{year}{2000}):
  \emph{\bibinfo{title}{Three qubits can be entangled in two inequivalent
  ways}}.
\newblock {\sl \bibinfo{journal}{Phys. Rev. A}}
  \bibinfo{volume}{62}(\bibinfo{number}{6}), p. \bibinfo{pages}{062314},
  \doi{10.1103/PhysRevA.62.062314}.

\bibitemdeclare{article}{PhysRevA.74.052336}
\bibitem{PhysRevA.74.052336}
\bibinfo{author}{L.~Lamata}, \bibinfo{author}{J.~Le\'on},
  \bibinfo{author}{D.~Salgado} \& \bibinfo{author}{E.~Solano}
  (\bibinfo{year}{2006}): \emph{\bibinfo{title}{Inductive classification of
  multipartite entanglement under stochastic local operations and classical
  communication}}.
\newblock {\sl \bibinfo{journal}{Phys. Rev. A}}
  \bibinfo{volume}{74}(\bibinfo{number}{5}), p. \bibinfo{pages}{052336},
  \doi{10.1103/PhysRevA.74.052336}.

\bibitemdeclare{article}{PhysRevA.75.022318}
\bibitem{PhysRevA.75.022318}
\bibinfo{author}{L.~Lamata}, \bibinfo{author}{J.~Le\'on},
  \bibinfo{author}{D.~Salgado} \& \bibinfo{author}{E.~Solano}
  (\bibinfo{year}{2007}): \emph{\bibinfo{title}{Inductive entanglement
  classification of four qubits under stochastic local operations and classical
  communication}}.
\newblock {\sl \bibinfo{journal}{Phys. Rev. A}}
  \bibinfo{volume}{75}(\bibinfo{number}{2}), p. \bibinfo{pages}{022318},
  \doi{10.1103/PhysRevA.75.022318}.

\bibitemdeclare{book}{MacLane}
\bibitem{MacLane}
\bibinfo{author}{Saunders~Mac Lane} (\bibinfo{year}{1998}):
  \emph{\bibinfo{title}{{Categories for the Working Mathematician (Graduate
  Texts in Mathematics)}}}, \bibinfo{edition}{2nd} edition.
\newblock \bibinfo{publisher}{Springer}.

\bibitemdeclare{article}{PhysRevA.81.052315}
\bibitem{PhysRevA.81.052315}
\bibinfo{author}{P.~Mathonet}, \bibinfo{author}{S.~Krins},
  \bibinfo{author}{M.~Godefroid}, \bibinfo{author}{L.~Lamata},
  \bibinfo{author}{E.~Solano} \& \bibinfo{author}{T.~Bastin}
  (\bibinfo{year}{2010}): \emph{\bibinfo{title}{Entanglement equivalence of
  $N$-qubit symmetric states}}.
\newblock {\sl \bibinfo{journal}{Phys. Rev. A}}
  \bibinfo{volume}{81}(\bibinfo{number}{5}), p. \bibinfo{pages}{052315},
  \doi{10.1103/PhysRevA.81.052315}.

\bibitemdeclare{book}{Quantum}
\bibitem{Quantum}
\bibinfo{author}{Michael~A. Nielsen} \& \bibinfo{author}{Isaac~L. Chuang}
  (\bibinfo{year}{2000}): \emph{\bibinfo{title}{{Quantum Computation and
  Quantum Information}}}.
\newblock \bibinfo{publisher}{Cambridge University Press},
  \doi{10.2277/0521635039}.

\bibitemdeclare{mastersthesis}{Roy2010}
\bibitem{Roy2010}
\bibinfo{author}{Shibdas Roy} (\bibinfo{year}{2010}): \emph{\bibinfo{title}{A
  Compositional Characterization of Multipartite Quantum States}}.
\newblock Master's thesis, \bibinfo{school}{University of Oxford}.
\newblock \urlprefix\url{http://www.cs.ox.ac.uk/people/bob.coecke/Shibdas.pdf}.

\bibitemdeclare{article}{Selinger04}
\bibitem{Selinger04}
\bibinfo{author}{Peter Selinger} (\bibinfo{year}{2004}):
  \emph{\bibinfo{title}{Towards a quantum programming language}}.
\newblock {\sl \bibinfo{journal}{Mathematical Structures in Computer Science}}
  \bibinfo{volume}{14}, pp. \bibinfo{pages}{527--586},
  \doi{10.1017/S0960129504004256}.

\bibitemdeclare{article}{Selinger2007139}
\bibitem{Selinger2007139}
\bibinfo{author}{Peter Selinger} (\bibinfo{year}{2007}):
  \emph{\bibinfo{title}{Dagger Compact Closed Categories and Completely
  Positive Maps: (Extended Abstract)}}.
\newblock {\sl \bibinfo{journal}{Electronic Notes in Theoretical Computer
  Science}} \bibinfo{volume}{170}, pp. \bibinfo{pages}{139 -- 163},
  \doi{10.1016/j.entcs.2006.12.018}.
\newblock \bibinfo{note}{Proceedings of the 3rd International Workshop on
  Quantum Programming Languages (QPL 2005)}.

\bibitemdeclare{incollection}{Selinger09Survey}
\bibitem{Selinger09Survey}
\bibinfo{author}{Peter Selinger} (\bibinfo{year}{2011}):
  \emph{\bibinfo{title}{A Survey of Graphical Languages for Monoidal
  Categories}}.
\newblock In \bibinfo{editor}{Bob Coecke}, editor: {\sl \bibinfo{booktitle}{New
  Structures for Physics}}, \bibinfo{edition}{1st} edition,
  chapter~\bibinfo{chapter}{4}, {\sl \bibinfo{series}{Lecture Notes in
  Physics}} \bibinfo{volume}{813}, \bibinfo{publisher}{Springer}, pp.
  \bibinfo{pages}{289--355}, \doi{10.1007/978-3-642-12821-9\_4}.

\bibitemdeclare{article}{PhysRevA.65.052112}
\bibitem{PhysRevA.65.052112}
\bibinfo{author}{F.~Verstraete}, \bibinfo{author}{J.~Dehaene},
  \bibinfo{author}{B.~De~Moor} \& \bibinfo{author}{H.~Verschelde}
  (\bibinfo{year}{2002}): \emph{\bibinfo{title}{Four qubits can be entangled in
  nine different ways}}.
\newblock {\sl \bibinfo{journal}{Phys. Rev. A}}
  \bibinfo{volume}{65}(\bibinfo{number}{5}), p. \bibinfo{pages}{052112},
  \doi{10.1103/PhysRevA.65.052112}.

\bibitemdeclare{article}{Wootters1982}
\bibitem{Wootters1982}
\bibinfo{author}{W.~K. Wootters} \& \bibinfo{author}{W.~H. Zurek}
  (\bibinfo{year}{1982}): \emph{\bibinfo{title}{A single quantum cannot be
  cloned}}.
\newblock {\sl \bibinfo{journal}{Nature}}
  \bibinfo{volume}{299}(\bibinfo{number}{5886}), pp. \bibinfo{pages}{802--803},
  \doi{10.1038/299802a0}.

\bibitemdeclare{article}{qutrit}
\bibitem{qutrit}
\bibinfo{author}{Xin-Gang Yang}, \bibinfo{author}{Zhi-Xi Wang},
  \bibinfo{author}{Xiao-Hong Wang} \& \bibinfo{author}{Shao-Ming Fei}
  (\bibinfo{year}{2008}): \emph{\bibinfo{title}{Classification of Bipartite and
  Tripartite Qutrit Entanglement under {SLOCC}}}.
\newblock {\sl \bibinfo{journal}{Communications in Theoretical Physics}}
  \bibinfo{volume}{50}(\bibinfo{number}{3}), pp. \bibinfo{pages}{651--654},
  \doi{10.1088/0253-6102/50/3/25}.

\end{thebibliography}
\end{document}